\documentclass[pdflatex,sn-basic]{sn-jnl}

\usepackage{graphicx}
\usepackage{amsmath,amssymb,amsfonts}
\usepackage{mathrsfs}
\usepackage[title]{appendix}
\usepackage[all,arc,2cell]{xy}
\UseAllTwocells

\hypersetup{
  bookmarksnumbered=true,
  colorlinks=true,
  linkcolor=blue,
  citecolor=blue,
  urlcolor=blue,
  pdftitle={A Topological Condition for Aggregation on Weak Preference Domains},
  pdfauthor={Sizhong Fang},
  pdfstartview=FitBH
}

\theoremstyle{thmstyleone}
\newtheorem{thm}{Theorem}[section]

\newtheorem{prop}[thm]{Proposition}
\newtheorem{lem}[thm]{Lemma}

\theoremstyle{thmstylethree}
\newtheorem{defn}[thm]{Definition}

\newtheorem{exmp}[thm]{Example}

\newtheorem{rem}[thm]{Remark}

\numberwithin{equation}{section}
\begin{document}

\title[Topological Aggregation on Weak Preference Domains]
{A Topological Condition for Aggregation on Weak Preference Domains}

\author*[1]{\fnm{Sizhong} \sur{Fang}}
\email{sizhong@uchicago.edu}

\affil[1]{\orgname{University of Chicago},
  \orgaddress{\city{Chicago}, \state{Illinois}, \country{USA}}}

\abstract{This paper provides a topological construction for studying social aggregation of weak preferences over a finite set of alternatives. We identify weak preferences with the cells of the type $A_{n-1}$ braid arrangement and, equivalently, with the simplices of the corresponding type $A_{n-1}$ Coxeter complex, which we call the preference complex. Then we define face-monotonicity by requiring a social welfare function to be order-preserving when viewed as a map between the preference-profile poset and the preference poset. We prove that if a common preference domain admits a unanimous, anonymous, and face-monotonic social welfare function, then every connected component of the geometric realization of its order complex is contractible. We then apply this construction to two restricted domains. For fixed-extrema domains, we prove that their order-complex realizations are contractible and define a unanimous, anonymous, and face-monotonic social welfare function using the meet operation. For single-peaked domains, we prove that the corresponding subcomplex of the preference complex has a contractible realization and, in the three-alternative case, construct a unanimous, anonymous, and face-monotonic social welfare function.}

\keywords{social choice theory, weak preferences, restricted preference domains, braid arrangements, Coxeter complexes, topological aggregation}

\maketitle

\section{Introduction} 
Social choice theory studies how societies combine individual preferences over a set of choices into collective decisions. \cite{chichilnisky1980} introduced topology into social choice theory in 1980 and proved an impossibility theorem for continuous, unanimous, and anonymous aggregation, under the assumption that preference spaces are CW complexes. \cite{baryshnikov1993} provided a unified topological proof of both Arrow's and Chichilnisky's impossibility theorems. Later, \cite{baryshnikovRoot} extended the construction to prove the Gibbard-Satterthwaite Theorem. These impossibility theorems generally begin with unrestricted preference domains. However, in practice, this is rarely the case, motivating scholars to characterize restricted domains that avoid such impossibilities. \cite{ChichilniskyHeal} showed that contractibility is essentially a necessary and sufficient condition to resolve Chichilnisky's impossibility theorem, and later \cite{weinberger} strengthened their result.

This paper provides a bridge between this topological condition and weak preference domains over finite sets of alternatives. Whereas Baryshnikov and Root relate discrete preferences to topology using nerve complexes, we take a different approach by identifying weak preferences with the cells of the type $A_{n-1}$ braid arrangement, extending \cite{terao}'s identification of strict preferences with chambers of the type $A_{n-1}$ braid arrangement. The resulting cell poset is the face poset of the type $A_{n-1}$ Coxeter complex, which we call the \emph{preference complex}. This construction leads to two complementary views of the cell poset. The preference complex makes the restricted domain geometrically explicit: each preference corresponds to a simplex, and the face-order records the replacement of some strict comparisons with indifferences. In the order complex, each preference is a vertex, and each simplex is a chain. We will show that a face-monotonic social welfare function is an order-preserving poset map, which induces a continuous map on geometric realizations. When $P$ is a subcomplex of the preference complex, its preference-complex realization and order-complex realization are homeomorphic.

\medskip
\noindent\textbf{Theorem \ref{main theorem}.}
Let $\mathcal{A}$ be a finite set of alternatives and $\mathcal{N}$ a finite set of agents, with $m := |\mathcal{N}| \geq 2$. Let $P$ be a common domain of weak preferences over $\mathcal{A}$ that excludes the totally indifferent preference. If there exists a unanimous, anonymous, and face-monotonic social welfare function
\[
F \colon P^m \longrightarrow P,
\]
then every connected component of $|\Delta(P)|$ is contractible.
\medskip

This theorem provides a topological necessary condition for aggregation in finite settings. If one connected component of the realized preference domain is non-contractible, then no social welfare function satisfying all three axioms can exist. We will make a quick note on \emph{face-monotonicity} since it is not a conventional property. In Section \ref{section 3}, we show that preference domains and preference profiles can be viewed as posets ordered by refinement, and face-monotonicity means that when we view a social welfare function as a map between these posets, the map is order-preserving. Intuitively, face-monotonicity requires that when agents weaken some strict comparisons into indifferences, the social preference may only weaken accordingly. Also note that we only establish necessity; sufficiency in the finite setting remains open.

Section \ref{section 2} provides necessary background on braid arrangements, constructs the preference complex, and shows that the full weak preference domain on $n$ alternatives has a realization that is homeomorphic to $S^{n-2}$. Section~\ref{section 3} introduces unanimous, anonymous, and face-monotonic social welfare functions and proves the main necessary condition for their existence. Section~\ref{section 4} studies fixed-extrema domains, establishes their
contractibility, constructs aggregation rules on them, and presents an
application to housing allocation. Finally, Section~\ref{section 5} proves the
contractibility of the single-peaked domain and constructs an aggregation rule
for the case of three alternatives.

\section{Braid Arrangements and Weak Preferences} \label{section 2}

In this section, we first introduce the type $A_{n-1}$ braid arrangement and prove that there exists a one-to-one correspondence between the cells of the braid arrangement and weak preferences on a finite set of alternatives. We then show that the cell poset of the type $A_{n-1}$ braid arrangement can be realized as a simplicial complex in which each weak preference corresponds to a unique simplex. We call this simplicial complex the \emph{preference complex}.

Let $V = \{x \in \mathbb{R}^n \mid \sum_{i = 1}^n x_i = 0\}$. For $i < j$, we define the hyperplane $H_{ij} = \{x \in V \mid x_i = x_j\}$, where $x_i$ and $x_j$ are the $i$-th and $j$-th coordinates of $x$. Let $\mathcal{H}_{Br}=\{H_{ij}\mid 1\leq i<j\leq n\}$ be the collection of all such hyperplanes. We call $\mathcal{H}_{Br}$ the \emph{type $A_{n-1}$ braid arrangement}. A \emph{cell} of $\mathcal{H}_{Br}$ is a nonempty set defined by
\[
A = \bigcap_{1 \leq i<j \leq n} U_{ij},
\]
where $U_{ij}$ is either $H_{ij}$, $H_{ij}^+ := \{x \in V \mid x_i > x_j\}$, or $H_{ij}^- := \{x \in V \mid x_i < x_j\}$.

Every cell $A$ of the braid arrangement can be written as 
\[
A=\{x\in V \mid B_1>\cdots>B_k\}
\]
for some $1\leq k\leq n$, where each $B_i$ is a nonempty block of coordinates that are equal, and every coordinate in $B_i$ is greater than every coordinate in $B_{i+1}$. In other words, $\pi = (B_1 |\cdots |B_k)$ is an ordered partition of $\{x_1, \dots, x_n\}$.

By the definition of a cell, we know that all cells are pairwise disjoint, and each cell determines a unique \emph{sign sequence} $\sigma=(\sigma_{ij})_{i<j}$, where $\sigma_{ij}\in \{0,+,-\}$. For each $i<j$, we set $\sigma_{ij}=0$ if $U_{ij}=H_{ij}$, $\sigma_{ij}=+$ if $U_{ij}=H_{ij}^{+}$, and $\sigma_{ij}=-$ if $U_{ij}=H_{ij}^{-}$. The cells for which $\sigma_{ij} \neq 0$ for all $i < j$ are called \emph{chambers}. Thus, chambers are the cells whose corresponding ordered partitions have every indifference class as a singleton. We use $\Sigma$ to denote the set that contains all cells of the type $A_{n-1}$ braid arrangement. 

Given cells $A, B \in \Sigma$, we say $B$ is a \emph{face} of $A$, and we write $B \leq A$, if for each pair $(i, j)$ either $\sigma_{ij}(B) = 0$ or $\sigma_{ij} (B) = \sigma_{ij}(A)$. Intuitively, a proper face of a cell $A$ can be obtained by changing one or more inequalities $x_i < x_j$ or $x_i > x_j$ into equalities $x_i = x_j$, along with any further equalities required for consistency, or equivalently, repeatedly merging adjacent indifference classes of $A$'s corresponding ordered partition $\pi$.

\begin{defn}
We define the \emph{face relation} on $\Sigma$ by declaring that $B \leq A$ if and only if $B$ is a face of $A$. This relation makes $\Sigma$ into a poset.
\end{defn}

\begin{prop} \label{greatest lower bound} \cite[p.~29]{Abramenko&Brown}
Any two cells in $\Sigma$ have a greatest lower bound.
\end{prop}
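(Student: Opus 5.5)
Since $B\le A$ means that $\sigma(B)$ agrees with $\sigma(A)$ wherever $\sigma(B)\neq 0$, I will construct the meet of two cells $A,A'$ directly from their sign sequences. Set
\[
\tau_{ij} = \begin{cases} \sigma_{ij}(A) & \text{if } \sigma_{ij}(A)=\sigma_{ij}(A'),\\ 0 & \text{otherwise.}\end{cases}
\]
Any common lower bound $C$ of $A$ and $A'$ satisfies: whenever $\sigma_{ij}(C)\neq 0$, it equals both $\sigma_{ij}(A)$ and $\sigma_{ij}(A')$. So every lower bound has sign sequence "below" $\tau$ coordinatewise. The difficulty is that $\tau$ itself need not be the sign sequence of a cell, because equalities must be transitive. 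The plan is therefore to take the largest cell lying below $\tau$.

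I will work with ordered partitions. Let $A=(B_1|\cdots|B_k)$ and $A'=(B'_1|\cdots|B'_l)$, and let $\rho_A,\rho_{A'}$ be the corresponding total preorders on $\{1,\dots,n\}$. The face relation translates as follows: $C\le A$ if and only if $\rho_A\subseteq\rho_C$ as relations. Equivalently, $C$ is obtained by merging adjacent blocks of $A$, i.e.\ $C$ is a coarsening of $A$ by merging consecutive blocks. A common lower bound is then a total preorder containing $\rho_A\cup\rho_{A'}$.

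Define $\rho$ to be the transitive closure of $\rho_A\cup\rho_{A'}$. It is reflexive and transitive, and it is total because $\rho_A$ is already total. Hence $\rho$ is a total preorder, so it is an ordered partition, i.e.\ a cell $C$. It contains $\rho_A$ and $\rho_{A'}$, so $C\le A$ and $C\le A'$. Any total preorder containing both relations contains their transitive closure, so every common lower bound $D$ satisfies $\rho\subseteq\rho_D$, that is, $D\le C$. Hence $C$ is the greatest lower bound.

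Concretely, the classes of $C$ are the strongly connected components of the digraph $\rho_A\cup\rho_{A'}$, ordered accordingly; equivalently, they are the minimal common "initial-segment cuts" of both orders. The one point to check is that totality and the block description remain consistent, namely that the components are intervals in both orders. This follows from transitivity.

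The main obstacle is stating the translation between the sign-sequence definition of $\le$ and relation inclusion cleanly. I would verify it pairwise. If $\sigma_{ij}(C)=0$, the relation $\rho_C$ contains both $i\rho j$ and $j\rho i$. If $\sigma_{ij}(C)=\sigma_{ij}(A)=+$, then both preorders contain exactly the strict comparison. The condition "$\sigma_{ij}(C)\in\{0,\sigma_{ij}(A)\}$ for all pairs" is then exactly $\rho_A\subseteq\rho_C$.

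Note that $C$ may be the totally indifferent cell $\{0\}$, which is a valid cell and is the bottom element of $\Sigma$. So the meet always exists in $\Sigma$.
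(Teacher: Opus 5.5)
Your argument is correct, and it takes a different route from the paper. The paper gives no proof of its own; it cites Abramenko--Brown, where the fact belongs to the general theory of reflection arrangements. There the cell poset is identified with the face poset of the Coxeter complex, in which the meet of two simplices is the simplex on their common vertices.

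You instead give a self-contained type-$A$ proof in two steps:
\begin{enumerate}
\item You translate the sign-sequence definition of $\le$ into reverse inclusion of total preorders: $C\le A$ if and only if $\rho_A\subseteq\rho_C$. The converse direction, which you only assert, needs the totality of $\rho_A$: if $\sigma_{ij}(C)=+$, then $(j,i)\notin\rho_A$, so totality forces $\sigma_{ij}(A)=+$.
\item You take the meet to be the transitive closure of $\rho_A\cup\rho_{A'}$. It is automatically total, and it is contained in every total preorder containing both relations, so it is the greatest lower bound.
\end{enumerate}

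The citation buys generality beyond type $A$, but it is a black box. Your construction buys an explicit formula for $A\wedge A'$, and Section~\ref{section 4} needs exactly this without supplying it. The claim there that $P^{\min}$ is closed under meets does not follow from the mere existence of greatest lower bounds in $\Sigma$. From your description it is immediate: $\{a_1,\dots,a_{n-1}\}$ is an upper set for both preorders, hence for their transitive closure, so $a_n$ stays strictly last.

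Your closing remark is also to the point. The orders $a_1\succ a_2\succ a_3$ and $a_3\succ a_2\succ a_1$ have the zero cell as their only common lower bound. So the proposition genuinely needs the totally indifferent cell in $\Sigma$, which matters once the paper removes it from every domain.
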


We provide an example for the case $n=3$, corresponding to the $A_2$ braid arrangement. By definition, we know that $V = \{x \in \mathbb{R}^3 \mid \sum_{i = 1}^3 x_i = 0 \}$. Then one valid chamber $C$ is $C = \{x \in V \mid x_1 > x_2 >x_3\}$. For simplicity, we will denote it by $C_{123}$. The ordered partition corresponding to this chamber is $\pi_{C_{123}} = (\{x_1\}| \{x_2\}| \{x_3\})$. Then one face of $C_{123}$ is $\{x \in V \mid x_1 = x_2 > x_3\}$. We denote this set by $F_{\{12\}|3}$. The ordered partition corresponding to $F_{\{12\}|3}$ is $\pi_{F_{\{12\}|3}} = (\{x_1,x_2\}| \{x_3\})$, which is obtained by merging the adjacent blocks $\{{x_1}\}$ and $\{{x_2}\}$ in $\pi_{C_{123}}$. Figure \ref{A2 Braid Arrangement} below is a visualization of the cell decompositions of the $A_2$ braid arrangement.
\begin{figure} [htbp]
    \centering
    \includegraphics[width=0.5\linewidth]{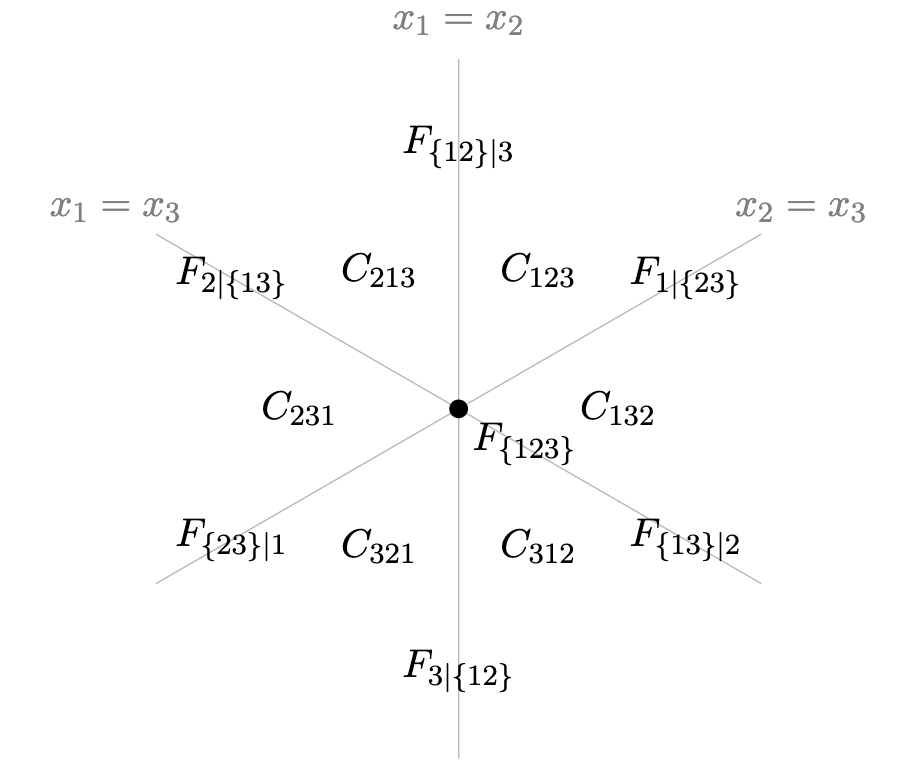}
    \caption{After restricting the arrangement to the essential part $V$, the $A_2$ braid arrangement consists of three hyperplanes in $\mathbb R^3$.}
    \label{A2 Braid Arrangement}
\end{figure}

Let $\mathcal{A} = \{a_1, \dots, a_n\}$ be the alternative set. Then every
weak preference over $\mathcal{A}$ can be written as an ordered partition of $\mathcal{A}$.
\[
\pi=(B_1|\cdots|B_k), \qquad B_1 \succ \cdots \succ B_k.
\]
Since each cell of the braid arrangement also has a unique corresponding ordered partition, we can relate these two constructions through the ordered partitions. The following theorem makes this part clear.

\begin{thm}[Weak preferences are cells of the braid arrangement] \label{cells and preference}
Let $\mathcal A=\{a_1,\dots,a_n\}$ be a finite set of alternatives, and let
$V=\left\{x\in \mathbb R^n \mid \sum_{i=1}^n x_i=0\right\}$. Then the cells of $\mathcal{H}_{Br}$ are in bijection with weak
preferences on $\mathcal A$. Under this bijection, a weak preference with
ordered indifference classes
\[
\pi=(B_1|\cdots|B_k),
\qquad
B_1\succ B_2\succ\cdots\succ B_k,
\]
corresponds to the cell
\[
C_\pi = \{\, x \in V \mid x_i = x_j \ \text{if } a_i, a_j \in B_r;\ 
    x_i > x_j \ \text{if } a_i \in B_r,\ a_j \in B_s,\ r < s \,\}.
\]
\end{thm}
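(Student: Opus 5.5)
I will construct maps in both directions and check that they are mutually inverse, using ordered partitions of the index set $\{1,\dots,n\}$ as the intermediary. Identify $a_i$ with the coordinate index $i$. A weak preference is then exactly an ordered partition $\pi=(B_1|\cdots|B_k)$ of $\{1,\dots,n\}$.

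\textbf{Forward map.} Given a weak preference $\pi$, define $C_\pi$ as in the statement. I first check that $C_\pi$ is a cell, i.e.\ of the form $\bigcap_{i<j}U_{ij}$. For each pair $i<j$:
\begin{itemize}
\item if $i,j$ lie in the same block, set $U_{ij}=H_{ij}$;
\item if $i\in B_r$, $j\in B_s$ with $r<s$, set $U_{ij}=H_{ij}^+$;
\item if $r>s$, set $U_{ij}=H_{ij}^-$.
\end{itemize}
By construction $C_\pi=\bigcap U_{ij}$.

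\textbf{Nonemptiness.} I exhibit an explicit point. Put $y_i=k-r$ for $i\in B_r$, and subtract the mean $\frac1n\sum_i y_i$ from every coordinate to land in $V$. Translating by a constant vector preserves every equality and inequality between coordinates, so the resulting point lies in $C_\pi$. Hence $C_\pi\in\Sigma$.

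\textbf{Backward map.} Given a cell $A$ with sign sequence $\sigma$, define the relation $i\sim j$ iff $\sigma_{ij}=0$ (with $\sigma_{ji}$ read with the sign flipped). Since $A$ is nonempty, pick any $x\in A$. Then $\sigma$ coincides with the sign pattern of $x$: $i\sim j$ iff $x_i=x_j$, and $i$ is ranked above $j$ iff $x_i>x_j$. Because these relations come from the real order on the values $x_i$, they are transitive. So the distinct values of $x$, listed in decreasing order, give an ordered partition $\pi_A=(B_1|\cdots|B_k)$, and this partition does not depend on the choice of $x$, since all points of $A$ share $\sigma$. This is the ordered partition already attached to cells in the text.

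\textbf{Inverses.} Since cells are determined by their sign sequences and these are pairwise disjoint, it suffices to compare sign patterns.
\begin{itemize}
\item $\pi_{C_\pi}=\pi$, because the point constructed above realizes exactly the blocks and order of $\pi$.
\item $C_{\pi_A}=A$, because both are the intersection of the same $U_{ij}$'s determined by $\sigma(A)$.
\end{itemize}

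\textbf{Main obstacle.} The only genuinely nontrivial points are nonemptiness of $C_\pi$ and the fact that an arbitrary sign sequence of a nonempty cell is consistent, i.e.\ transitive. Both are handled by passing through an actual point of $V$, so I expect no real difficulty. The one care needed is the normalization to the sum-zero subspace, and that is harmless under translation by a constant vector.
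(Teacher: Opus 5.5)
Your proof is correct and takes essentially the same approach as the paper. Both arguments show $C_\pi \neq \varnothing$ by giving blocks decreasing values and centering the result into $V$, read a weak preference off any point of a cell using the constancy of the signs of $x_i - x_j$, and verify that the two maps are inverse by comparing sign sequences; your explicit assignment of $U_{ij}$ to each pair only spells out what the paper leaves to ``by our construction.''
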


\begin{proof}

First, we will show that, given a weak preference over $\mathcal{A}$ and its ordered partition $\pi=(B_1|\cdots|B_k)$, the corresponding cell $C_\pi$ is a valid cell of $\mathcal{H}_{Br}$. By our construction of $C_\pi$, it is sufficient to show that $C_\pi$ is nonempty. Let $\{r_1, \dots, r_k\}$ be an arbitrary finite decreasing sequence of positive integers.

We define $y_i = r_j$ if $a_i \in B_j$, where $i \in \{1, \dots, n\}$ and $j \in \{1, \dots, k\}$. Then $(y_1, \dots, y_n)$ satisfies the relationship defined by $C_\pi$. Now we define $x_i$ by
\[
x_i = y_i - \frac{1}{n}\sum_{j = 1}^{n} y_j.
\]

Since the $x_i$ are obtained by subtracting the same constant from the $y_i$, $(x_1,\dots,x_n)$ also satisfies the relationship defined by $C_\pi$. Furthermore, $\sum_{i=1}^{n} x_i = 0$, so $x = (x_1, \dots, x_n) \in C_\pi$. Thus, $C_\pi$ is nonempty, so it is a cell in $\mathcal{H}_{Br}$.

Conversely, let $C$ be a cell of the braid arrangement and choose
$x\in C$. Define a relation $\succeq_C$ on $\mathcal A$ by
\[
a_i\succeq_C a_j
\quad\Longleftrightarrow\quad
x_i\geq x_j.
\]
The sign of $x_i-x_j$ is the same for every point in $C$, so this relation is independent of the choice of $x\in C$. Since the usual order on $\mathbb R$ is complete and transitive, $\succeq_C$ is a weak preference relation.

It remains to verify that these constructions are mutually inverse. By definition, an ordered partition $\pi$ is completely determined by the pairwise relations of its elements, which correspond exactly to the sign sequence of the coordinates $x_i - x_j$. Since a cell $C$ in the braid arrangement is uniquely identified by this sign sequence, the mappings $\pi \mapsto C_\pi$ and $C \mapsto \pi_C$ preserve this underlying data. Consequently, $\pi_{C_\pi} = \pi$ and $C_{\pi_C} = C$, establishing the bijection.
\end{proof}

Recall that an \emph{abstract simplicial complex} with vertex set $V$ is a nonempty set $\Delta$ of finite subsets of $V$ satisfying the following two conditions: for every $v \in V$, $\{v\} \in  \Delta$; if $\delta \in \Delta$, then every subset of $\delta$ is also in $\Delta$. We call $\delta$ a \emph{simplex}, and a subset of $\delta$ a \emph{face} of $\delta$. The \emph{rank} of the simplex $\delta$ is defined as its cardinality $r$. The \emph{dimension} of $\delta$ is defined as $r - 1$. The singleton set $\{v\} \in \Delta$ has dimension $0$ and is called a \emph{vertex}. For our construction, we will also include the empty set $\varnothing$ with rank $0$ and dimension $-1$ in $\Delta$. A \emph{subcomplex} $\Delta'$ of $\Delta$ is a subset of $\Delta$ such that for every simplex $\delta \in \Delta'$, all the subsets of $\delta$ are in $\Delta'$. 

Given an abstract simplicial complex $\Delta$, its \emph{face poset}
$\mathcal{F}(\Delta)$ consists of the simplices of $\Delta$, including
$\varnothing$ under our convention, ordered by inclusion. Thus, the elements
of $\mathcal{F}(\Delta)$ are in one-to-one correspondence with the simplices
of $\Delta$, and the face poset determines $\Delta$ up to simplicial
isomorphism.

It is known that the cell poset $\Sigma$ of the type $A_{n-1}$ braid
arrangement is isomorphic to the face poset of the type $A_{n-1}$ Coxeter
complex. Through this isomorphism, each cell of the braid arrangement
corresponds uniquely to a simplex of the Coxeter complex. Accordingly, by a
slight abuse of notation, we also use $\Sigma$ to denote the resulting
abstract simplicial complex.

Since we proved there is a one-to-one correspondence between weak preferences and cells of $\Sigma$ in Theorem \ref{cells and preference}, we restate this correspondence when treating $\Sigma$ as a simplicial complex and conclude the following theorem.

\begin{thm} [Preference Complex] \label{(abstract) preference complex}
Let $\mathcal{A} = \{a_1, \dots, a_n \}$ be a finite set of alternatives. The set of weak preferences over $\mathcal{A}$ is in bijection with the set of simplices of the simplicial complex $\Sigma$ induced by the type $A_{n-1}$ braid arrangement. This simplicial complex is the type $A_{n-1}$ Coxeter complex, and we call $\Sigma$ the \emph{preference complex}.
\end{thm}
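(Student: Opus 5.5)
The plan is to compose two bijections that are essentially already in place. Theorem~\ref{cells and preference} gives a bijection $\pi \mapsto C_\pi$ between weak preferences on $\mathcal{A}$ and the cells in $\Sigma$. The second ingredient is the known poset isomorphism between $(\Sigma,\leq)$ and the face poset $\mathcal{F}$ of the type $A_{n-1}$ Coxeter complex, which sends each cell to a unique simplex. Composing $\pi \mapsto C_\pi \mapsto \sigma(C_\pi)$ gives the desired bijection between weak preferences and simplices, and naming the complex the preference complex is then only a definition.

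To make the second step concrete rather than purely cited, I will describe the isomorphism explicitly. The vertices of the Coxeter complex are the rays of the arrangement. These are the cells whose ordered partition has exactly two blocks $(B_1\mid B_2)$, equivalently the proper nonempty subsets $S=B_1\subsetneq\mathcal{A}$. A cell with ordered partition $(B_1\mid\cdots\mid B_k)$ is sent to the set of vertices
\[
\{\,B_1,\ B_1\cup B_2,\ \dots,\ B_1\cup\cdots\cup B_{k-1}\,\},
\]
which is a chain of proper nonempty subsets of $\mathcal{A}$. I will check three things:
\begin{enumerate}
\item Every chain $\varnothing\subsetneq S_1\subsetneq\cdots\subsetneq S_{k-1}\subsetneq\mathcal{A}$ arises from exactly one ordered partition, namely $B_i=S_i\setminus S_{i-1}$. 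This gives the bijection.
\item Merging adjacent blocks corresponds to deleting an element of the chain, so the face relation $B\leq A$ matches inclusion of simplices.
\item Chains are closed under taking subsets, so the image is an abstract simplicial complex. This is the order complex of the Boolean lattice with top and bottom removed, i.e.\ the barycentric subdivision of the boundary of an $(n-1)$-simplex, which is the standard model of the type $A_{n-1}$ Coxeter complex.
\end{enumerate}

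The totally indifferent preference $(\mathcal{A})$ has one block and maps to the empty chain. This matches the convention that $\varnothing\in\Delta$, so the count of simplices agrees.

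I expect the only real obstacle to be identifying this chain complex with \emph{the} Coxeter complex $\Sigma(S_n,S)$, whose simplices are the standard cosets $wW_J$. I would cite this identification from Abramenko--Brown rather than prove it. If a direct argument is wanted, the map sends $wW_J$ to the chain of sets $w(\{1,\dots,i\})$ for $i$ ranging over positions not in $J$. One checks that this map is well defined and bijective, and that it reverses inclusion of cosets into inclusion of simplices. Everything else is bookkeeping with ordered partitions.
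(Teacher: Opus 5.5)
Your proposal is correct and takes the same route as the paper, which simply composes the bijection of Theorem~\ref{cells and preference} with the known isomorphism between the braid-arrangement cell poset and the face poset of the type $A_{n-1}$ Coxeter complex. Your explicit description goes further than the paper, which only cites that isomorphism: you send a cell $(B_1\mid\cdots\mid B_k)$ to the flag $B_1\subsetneq B_1\cup B_2\subsetneq\cdots$ of proper nonempty subsets, the total indifference to the empty chain, and the coset $wW_J$ to the corresponding flag.
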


Given an abstract simplicial complex $\Delta$ with vertex set $V(\Delta)$, let
$\mathbb{R}^{V(\Delta)}$ denote the real vector space with basis
$\{e_v \mid v \in V(\Delta)\}$ indexed by the vertices of $\Delta$. For each
nonempty simplex $\delta \in \Delta$, we define its \emph{geometric simplex} $|\delta|$ as follows:
\[
|\delta|
=
\left\{
\sum_{v\in \delta} \lambda_v e_v
\ \middle|\
\lambda_v > 0 \text{ for all } v\in \delta,
\quad
\sum_{v\in \delta}\lambda_v=1
\right\}.
\]
The geometric realization of $\Delta$ is then defined as
\[
|\Delta| := \bigcup_{\varnothing \neq \delta\in \Delta} |\delta|.
\]
We call $|\Delta|$ the \emph{geometric realization} of $\Delta$. With the definition of geometric realization and Theorem \ref{(abstract) preference complex}, we obtain the following theorem

\begin{prop} \label{preference poset as preference complex}
Let $\Sigma$ be the preference complex of $\mathcal{A}=\{a_1,\dots,a_n\}$ and $|\Sigma|$ be its geometric realization. Then the weak preferences on
$\mathcal{A}$ other than the totally indifferent preference $\pi_0 = (\mathcal{A}) = \{a_1 \sim \cdots \sim a_n\}$ are in bijection with
the open geometric simplices of $|\Sigma|$, and these open simplices partition
$|\Sigma|$. Under this bijection, a preference with ordered indifference classes
$\pi=(B_1|\cdots|B_k)$, $k\ge 2$, corresponds to an open simplex of dimension
$k-2$. The totally indifferent preference $\pi_0=(\mathcal{A})$ corresponds to the
empty simplex $\varnothing$ and is therefore not represented in $|\Sigma|$.
\end{prop}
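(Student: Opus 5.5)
The plan is to combine Theorem \ref{(abstract) preference complex} with two standard facts about geometric realizations, and then to pin down the dimension count by identifying the vertices of $\Sigma$ explicitly.

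\textbf{Step 1: open simplices partition $|\Delta|$.} For any abstract simplicial complex $\Delta$, the open geometric simplices $|\delta|$, $\varnothing\neq\delta\in\Delta$, are pairwise disjoint and cover $|\Delta|$. A point $p=\sum_v \lambda_v e_v$ of $|\Delta|$ lies in $|\delta|$ exactly when $\delta=\{v\mid \lambda_v>0\}$, its support. The support is uniquely determined by $p$, since the $e_v$ form a basis of $\mathbb{R}^{V(\Delta)}$. So each point lies in exactly one open simplex. Moreover, $\delta\mapsto|\delta|$ is injective on nonempty simplices, because $|\delta|$ determines its support. This gives a bijection between nonempty simplices of $\Delta$ and open geometric simplices of $|\Delta|$, and $|\varnothing|$ is not defined or is empty, so $\varnothing$ is not represented.

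\textbf{Step 2: transport through Theorem \ref{(abstract) preference complex}.} Applying Step 1 to $\Sigma$ and composing with the bijection of Theorem \ref{(abstract) preference complex}, we must check that this bijection is an isomorphism of posets. This follows from Theorem \ref{cells and preference} together with the stated isomorphism between the cell poset and the face poset of the Coxeter complex. The face poset $\mathcal{F}(\Sigma)$ has a unique minimum $\varnothing$. In the cell poset, the unique minimum is the cell with all $\sigma_{ij}=0$, which is the origin $\{x\in V\mid x_1=\cdots=x_n\}=\{0\}$. This cell corresponds to $\pi_0=(\mathcal{A})$. Since a poset isomorphism carries minima to minima, $\pi_0\leftrightarrow\varnothing$, and every other preference corresponds to a nonempty simplex and hence to a unique open simplex.

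\textbf{Step 3: dimension count.} We show that $\pi=(B_1|\cdots|B_k)$ corresponds to a simplex of rank $k-1$. In the face poset of a simplicial complex, the rank of $\delta$ equals the length of any maximal chain from $\varnothing$ to $\delta$, so it suffices to compute this length in the cell poset. Faces of $\pi$ are obtained by merging adjacent blocks, and a cover relation merges exactly one adjacent pair, lowering the number of blocks by one. Hence every maximal chain from $(\mathcal{A})$ (one block) up to $\pi$ ($k$ blocks) has exactly $k-1$ cover steps. The rank is therefore $k-1$, and the dimension is $k-2$. As a sanity check, the vertices (rank $1$) are then the two-block ordered partitions, and the chambers ($k=n$) give top simplices of dimension $n-2$, consistent with $|\Sigma|\cong S^{n-2}$.

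\textbf{Main obstacle.} The delicate point is Step 3. The paper only asserts the poset isomorphism with the Coxeter complex and does not make explicit how rank is preserved, so the argument must use the cover-relation description of faces. The key claim to verify is that a cover $B<A$ in $\Sigma$ is exactly a single merge of adjacent blocks. This holds because any face obtained by merging two or more adjacent pairs can be reached through an intermediate face that merges only one of those pairs.
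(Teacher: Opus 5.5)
Your proposal is correct and follows the route the paper indicates. The paper's only justification is ``the definition of geometric realization and Theorem~\ref{(abstract) preference complex},'' which corresponds to your Steps~1 and~2: open simplices are indexed by their supports, and the minimal cell $\{0\}$ is sent to $\varnothing$. Your Step~3 adds something the paper leaves implicit: it derives the dimension $k-2$ by observing that the interval below $\pi=(B_1|\cdots|B_k)$ is a Boolean lattice on its $k-1$ bars, so every maximal chain from $\varnothing$ to $\pi$ has length $k-1$.
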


Now we move to the end result of this section. Given a finite-dimensional real vector space $V$ equipped with an inner product, we define the set $S(V) := \{x \in V \mid \lVert x \rVert = 1\}$. In other words, $S(V)$ is the unit sphere in $V$. Let $r:= \operatorname{dim} V$. We choose an orthonormal basis $\{e_1, \dots, e_r\}$ of $V$. Then the coordinate map $v = \sum_{i=1}^r x_i e_i \mapsto (x_1, \dots, x_r)$ induces a homeomorphism between $S(V)$ and $S^{r-1}$.

\begin{prop}[Geometric realization of the preference complex]
\label{prop:preference-complex-realization}
Let $\mathcal{A}=\{a_1,\ldots,a_n\}$, where $n\geq 2$, and let $\Sigma$
be the preference complex over $\mathcal{A}$, namely, the type $A_{n-1}$
Coxeter complex induced by the braid arrangement on
\[
V
=
\left\{
x\in\mathbb{R}^n
\;\middle|\;
\sum_{i=1}^n x_i=0
\right\}.
\]
There exists a homeomorphism
\[
\phi:|\Sigma|\longrightarrow S(V)
\]
such that, for every nonzero cell $A\in\Sigma$, the restriction
\[
\left.\phi\right|_{|A|}
:
|A|\longrightarrow A\cap S(V)
\]
is a homeomorphism. Consequently, the geometric realization of the
preference complex satisfies
\[
|\Sigma|
\cong
S(V)
\cong
S^{n-2},
\]
and is therefore noncontractible.
\end{prop}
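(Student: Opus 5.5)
We construct $\phi$ explicitly by sending each vertex of $\Sigma$ to a unit vector in the corresponding ray, extending linearly on simplices, and normalizing. The vertices of $\Sigma$ are the rank-one simplices, which by Proposition \ref{preference poset as preference complex} are the preferences $\pi=(B_1|B_2)$ with two blocks; each corresponds to an open ray $C_\pi$ in $V$ (a one-dimensional cell). For such $\pi$ we let $u_\pi$ be the unique unit vector in $C_\pi$; explicitly, $u_\pi$ is the normalization of the vector with value $|B_2|$ on the coordinates in $B_1$ and $-|B_1|$ on those in $B_2$. A general cell $C_\pi$ with $\pi=(B_1|\cdots|B_k)$ has as its vertices (rank-one faces) the $k-1$ two-block coarsenings $\pi^{(s)}=(B_1\cup\cdots\cup B_s\,|\,B_{s+1}\cup\cdots\cup B_k)$, $1\le s\le k-1$. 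This identification of the simplex of $C_\pi$ with its two-block coarsenings is what the isomorphism of $\Sigma$ with the Coxeter complex gives us. For a point $p=\sum_s\lambda_s e_{\pi^{(s)}}\in|C_\pi|$ we define
\[
\phi(p)=\frac{\sum_s \lambda_s u_{\pi^{(s)}}}{\bigl\lVert \sum_s \lambda_s u_{\pi^{(s)}}\bigr\rVert}.
\]

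The key geometric step is to show that $C_\pi$ is exactly the open simplicial cone $\{\sum_s \mu_s u_{\pi^{(s)}}:\mu_s>0\}$. Writing $x\in V$ with block values $c_1>\cdots>c_k$, we decompose $x=\sum_{s}(c_s-c_{s+1})\,w_s$ plus a constant vector, where $w_s$ is the indicator of $B_1\cup\dots\cup B_s$. Projecting onto $V$ kills the constant vector and sends each $w_s$ to a positive multiple of $u_{\pi^{(s)}}$. This gives both inclusions, and it also shows that the $u_{\pi^{(s)}}$ are linearly independent, since $k-1$ of them span the $(k-1)$-dimensional linear span of $C_\pi$. Linear independence gives nonvanishing of the denominator and injectivity of $\phi$ on each $|C_\pi|$. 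Normalization then maps the open simplex bijectively onto $C_\pi\cap S(V)$: every ray in the open cone meets the image of the open simplex exactly once, because the coefficients $\mu_s$ are unique and scaling them to sum to $1$ determines a unique $\lambda$. Continuity of the inverse on each closed simplex follows from compactness.

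Globally, the definitions agree on shared faces, since they only use the vertex vectors, so $\phi$ is a well-defined continuous map on $|\Sigma|$, which carries the weak topology of a finite complex. The cells other than $\{0\}$ partition $V\setminus\{0\}$, so their traces on $S(V)$ partition $S(V)$. The open simplices partition $|\Sigma|$ by Proposition \ref{preference poset as preference complex}. Hence $\phi$ is a bijection, and a continuous bijection from the compact space $|\Sigma|$ to the Hausdorff space $S(V)$ is a homeomorphism. Since $\dim V=n-1$, the remark preceding the proposition gives $S(V)\cong S^{n-2}$, and this sphere is noncontractible for $n\ge2$: $S^0$ is disconnected, and for $n\ge3$ the homology $\tilde H_{n-2}(S^{n-2})\neq0$.

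I expect the main obstacle to be the cone description of $C_\pi$ together with the precise matching of the face poset of $\Sigma$ with the vertex sets of the cones. That is, we must verify that the faces of $C_\pi$ of rank one are exactly the two-block coarsenings $\pi^{(s)}$, and that merging blocks corresponds to setting some $\mu_s=0$. Everything else is formal point-set topology.
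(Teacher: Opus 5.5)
Your proof is correct, and it takes a genuinely different route from the paper. The paper constructs nothing. It cites the realization theorem for Coxeter complexes in Abramenko and Brown (p.~53) for both the existence of $\phi$ and its cellwise restrictions, and then only observes that $\dim V=n-1$.

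You instead build $\phi$ by hand: you put a unit vector $u_v$ on each ray, extend linearly over simplices, and project radially. The key step is your cone lemma: $C_\pi$ is the open simplicial cone on $u_{\pi^{(1)}},\dots,u_{\pi^{(k-1)}}$. You prove it through the decomposition $x=\sum_s(c_s-c_{s+1})w_s+c_k\mathbf{1}$ followed by projection onto $V$. That lemma gives linear independence and a bijection of each open simplex onto $C_\pi\cap S(V)$. Global bijectivity then follows from the two matching partitions, and the compact-to-Hausdorff argument finishes the proof.

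The step you flag as the main obstacle is immediate from the paper's sign-sequence definition of the face order. A two-block preference $(S|T)$ lies below $\pi$ exactly when every element of $S$ is strictly above every element of $T$ in $\pi$, that is, when $S=B_1\cup\cdots\cup B_s$ for some $s$. So you do not really need to appeal to the Coxeter-complex isomorphism for it.

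Your approach is self-contained and specific to type $A$. It shows directly why each braid cell is a simplicial cone, a fact the paper takes as known. It also gives an explicit formula for $\phi$, which makes the later appendix use of $\phi$ (carrying $|K_p|$ onto $U_p\cap S(V)$) transparent. The citation is much shorter and applies verbatim to any finite reflection arrangement. You also justify noncontractibility of $S^{n-2}$, including the disconnected case $n=2$, which the paper leaves implicit.
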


\begin{proof}
The existence of $\phi$ and its restriction to each open simplex follows
from the realization theorem for the type $A_{n-1}$ Coxeter complex
\cite[p.~53]{Abramenko&Brown}. Since $V$ has dimension $n-1$, its unit
sphere is homeomorphic to $S^{n-2}$. Therefore,
\[
|\Sigma|\cong S(V)\cong S^{n-2}.
\]
\end{proof}

Figure \ref{fig:n=3 preference complex} provides a visualization of the geometric realization of the preference complex $|\Sigma|$ for the $n=3$ case. In this case, we have $|\Sigma| \cong S^{1}$. The vertices are the weak preferences with two indifference classes, and the 1-simplices are the six strict preferences. 

\begin{figure} [htbp]
    \centering
    \includegraphics[width=0.5\linewidth]{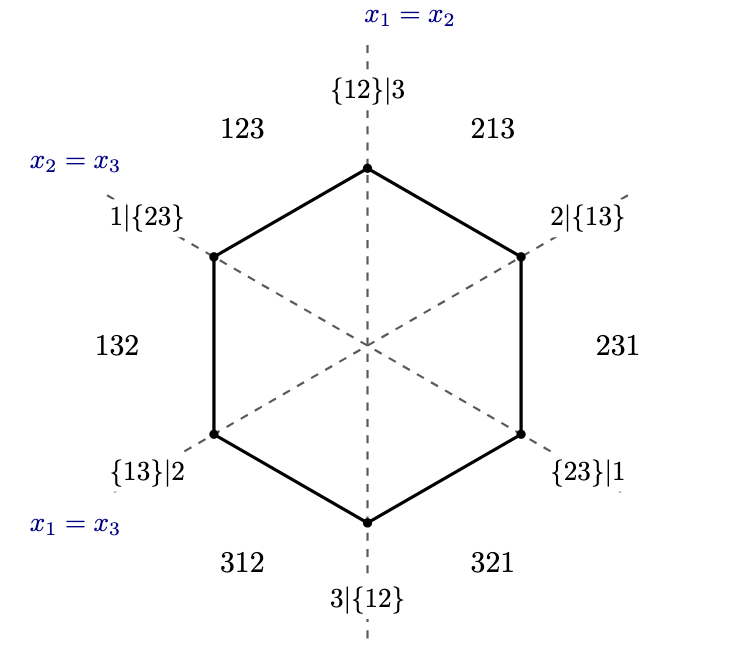}
 \caption{The geometric realization of the preference complex
$\lvert\Sigma\rvert$ for $n=3$. In this case,
$\lvert\Sigma\rvert\cong S^1$. For simplicity, $123$ denotes the
strict preference $a_1\succ a_2\succ a_3$, while
$\{13\}\vert 2$ denotes the weak preference
$a_1\sim a_3\succ a_2$.}
    \label{fig:n=3 preference complex}
\end{figure}

\section{A Topological Condition to Social Aggregation} \label{section 3}

In the previous section, we studied preferences over a finite set of alternatives for a single agent. In this section, we study a map that takes the agents’ preferences as input and outputs a single preference. These maps are called social welfare functions (SWFs). Readers can think of a social welfare function as a rule that considers everyone's preferences and comes up with a ``social preference''. For our purposes, we assume that the set of agents $\mathcal{N}$ and the set of alternatives $\mathcal{A}$ are both finite. We also assume that all agents have the same preference domain. There are many conditions that one can impose on a social welfare function, such as IIA and Pareto optimality, which appear in the famous Arrow's Impossibility Theorem. We study how the existence of a social welfare function satisfying unanimity, anonymity, and face-monotonicity is related to the contractibility of the geometric realization of the preference domain.

Throughout the paper, we have identified each weak preference with a cell of the type $A_{n-1}$ braid arrangement. We retain this identification in this section and write $A$ for a preference cell, rather than $\succeq$, to emphasize the face relation between preferences. 

\begin{rem} \label{rem: excluding empty cell}
By Theorem \ref{cells and preference} and Proposition \ref{preference poset as preference complex}, since each weak preference can be identified with a cell in $\Sigma$, the poset of cells of the type $A_{n-1}$ braid arrangement, we know that any preference domain $P$ is a subset of $\Sigma$. The preference domain $P$ can be viewed as a poset with the order inherited from $\Sigma$. Throughout this section we exclude the empty simplex $\varnothing$ (the totally indifferent preference) from every domain $P$ (including the full preference domain $\Sigma$): it is not represented in the geometric realization (Proposition \ref{preference poset as preference complex}), and since
it lies below every cell, keeping it would make every order complex $\Delta(P)$ a cone, hence contractible.
\end{rem}

\begin{defn} \label{preference profiles and social welfare functions}
Let $P$ be the preference domain, $\mathcal{N}$ be the set of agents, and $\mathcal{A}$ be the set of alternatives. After indexing the agents by $\{1, \dots, |\mathcal{N}|\}$, we define a \emph{preference profile} as a $|\mathcal{N}|$-tuple $(A_1, \dots, A_{|\mathcal{N}|})$, where $A_i \in P$ represents agent $i$'s preference over $\mathcal{A}$. Thus, the set of preference profiles is $P^{|\mathcal{N}|}$. A \emph{social welfare function} $F: P^{|\mathcal{N}|} \to P$ is a function that takes in a preference profile and outputs a preference.
\end{defn}

Note that in our case, all agents share the same preference domain $P$, which is also the codomain of $F$. Now we define the concepts of unanimity and anonymity. 

\begin{defn}
Let $X$ be a set and $N$ a finite index set. A map $f: X^{N} \to X$ is said to be \emph{unanimous} if $f(x, \dots, x) = x$, for every $x \in X$. Equivalently, $f$ is unanimous if $f \circ d = id$, where $d(x) = (x, \dots, x)$ is the diagonal map. A map $f: X^{N} \to X$ is said to be \emph{anonymous} if $f(x_1, \dots, x_N) = f(x_{\pi(1)}, \dots, x_{\pi(N)})$ for every permutation $\pi$ of $N$.
\end{defn}

Since $P$ is a poset that inherits the face order from $\Sigma$, we equip the set of preference profiles $P^{|\mathcal{N}|}$ with the coordinatewise face order and turn it into a poset:
\[
(A'_1,\ldots,A'_{|\mathcal{N}|})
\leq
(A_1,\ldots,A_{|\mathcal{N}|})
\]
if and only if
\[
A'_i  \leq  A_i
\qquad\text{for every }i\in\{1,\ldots,|\mathcal{N}|\}.
\]

A social welfare function $F$ is \emph{face-monotonic} if whenever $(A'_1,\ldots,A'_{|\mathcal{N}|}) \leq (A_1,\ldots,A_{|\mathcal{N}|})$, then \[F(A'_1,\ldots,A'_{|\mathcal{N}|}) \leq F(A_1,\ldots,A_{|\mathcal{N}|}).\] Equivalently, $F$ is an \emph{order-preserving} map between posets.

Intuitively, face-monotonicity means that if agents change some strict comparisons to indifferences, then the social preference can not introduce a new or reversed strict comparison. 

We now introduce the concept of order complex. 

\begin{defn}
Let $P$ be a poset. Define $\Delta(P)$ to be the abstract simplicial complex whose vertices are elements of $P$ ($\operatorname{Vert}(\Delta(P)) = P)$ and whose simplices are all finite chains of $P$, including the empty chain. We call $\Delta(P)$ the \emph{order complex} of $P$.
\end{defn}

\begin{rem}
Previously, we treated the poset $\Sigma$ as the face poset of a
simplicial complex, which we call the \emph{preference complex}, and used $\Sigma$ to denote this simplicial complex as well. However, it is important not to confuse the preference complex $\Sigma$ with its order complex $\Delta(\Sigma)$. For the preference complex $\Sigma$, every element of $\Sigma$ is a simplex with its corresponding dimension. For the order complex $\Delta(\Sigma)$, every element $A \in \Sigma$ is a vertex, and a k-simplex corresponds to a chain $A_0 < \cdots < A_k$ in $\Sigma$. For simplicity, we will continue to use $\Sigma$ for the preference complex and $\Delta(\Sigma)$ for its order complex. The results we obtained for $|\Sigma|$ and the geometric realization of its subcomplexes still hold for $|\Delta(\Sigma)|$ and the corresponding subcomplexes of $|\Delta(\Sigma)|$.
\end{rem}

\begin{prop} \label{homeo of geom of order complex and face complex}
The geometric realizations $|\Sigma|$ and $|\Delta(\Sigma)|$ are homeomorphic. More generally, for every subcomplex $\Gamma\subseteq\Sigma$, the order complex $\Delta(\Gamma)$ is a subcomplex of $\Delta(\Sigma)$. The realizations $|\Gamma|$ and $|\Delta(\Gamma)|$ are also homeomorphic.
\end{prop}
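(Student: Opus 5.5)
The plan is to identify $\Delta(\Gamma)$ with the \emph{barycentric subdivision} of $\Gamma$. Under the convention of Remark \ref{rem: excluding empty cell}, the poset in play is $\Gamma\setminus\{\varnothing\}$, the nonempty simplices of $\Gamma$ ordered by inclusion (equivalently, by the face relation on cells). A chain $\delta_0<\delta_1<\cdots<\delta_k$ of nonempty simplices of $\Gamma$ is in particular a chain in $\Sigma$. Hence every simplex of $\Delta(\Gamma)$ is a simplex of $\Delta(\Sigma)$. A subchain of a chain is again a chain, so $\Delta(\Gamma)$ is closed under taking faces, and it is therefore a subcomplex of $\Delta(\Sigma)$. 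This settles the middle assertion. The first assertion is the special case $\Gamma=\Sigma$, so it suffices to prove $|\Gamma|\cong|\Delta(\Gamma)|$ for an arbitrary subcomplex $\Gamma$.

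Next I will write down the standard homeomorphism explicitly. For each nonempty simplex $\delta\in\Gamma$, let its barycenter be
\[
b(\delta)=\frac{1}{|\delta|}\sum_{v\in\delta}e_v\in|\delta|\subseteq|\Gamma|,
\]
where $|\delta|$ here denotes cardinality. Define $h:|\Delta(\Gamma)|\to|\Gamma|$ on the simplex of a chain $\delta_0<\cdots<\delta_k$ by
\[
h\Bigl(\sum_{i}\lambda_i e_{\delta_i}\Bigr)=\sum_i\lambda_i\, b(\delta_i).
\]
This point lies in the closed simplex of $\delta_k$, since every $\delta_i\subseteq\delta_k$. The map is affine on each simplex of $\Delta(\Gamma)$. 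Two such formulas agree on the common faces of chains, so $h$ is well defined and continuous, because both spaces carry the coherent topology with respect to their simplices.

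The main step is bijectivity. Take a point $x=\sum_{v\in\delta}\mu_v e_v$ in the open simplex $|\delta|$. Sort the distinct positive values of $\mu$ as $c_1>c_2>\cdots>c_k$. Put $\delta_j=\{v\mid \mu_v\ge c_j\}$, which gives a chain $\delta_1\subsetneq\cdots\subsetneq\delta_k=\delta$. Then $x$ is expressed uniquely as a positive combination of the barycenters $b(\delta_j)$, with coefficients determined by the successive differences $c_j-c_{j+1}$ (setting $c_{k+1}=0$) weighted by $|\delta_j|$. Conversely, any positive combination of barycenters of a chain yields a weight vector whose level sets are exactly that chain. Both uniqueness and existence therefore follow from this level-set decomposition, so $h$ is a bijection, and it restricts to affine bijections from the simplices of $\Delta(\Gamma)$ onto the subdivided pieces of the simplices of $\Gamma$.

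Finally, a continuous bijection that restricts to homeomorphisms between closed simplices, with both sides having the weak topology, is a homeomorphism. In our situation the complexes are finite anyway, since $\Sigma$ is finite, so compactness and Hausdorffness already suffice. The only delicate point I expect is the bookkeeping of the level-set decomposition. Note also that for general subsets of $\Sigma$ that are not subcomplexes this argument fails, which is why the statement is restricted to subcomplexes $\Gamma$.
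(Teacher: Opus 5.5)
Your proposal is correct and takes the same route as the paper, which just cites the standard fact that the order complex of the poset of nonempty faces is the barycentric subdivision, and that subdivision does not change the realization. You make that fact explicit through the barycenter map and the level-set decomposition, and you treat the subcomplex $\Gamma$ directly (using that every $\delta_j\subseteq\delta$ lies in $\Gamma$) rather than restricting the homeomorphism for $\Sigma$; this amounts to the same argument.
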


\begin{proof}
It is a known result that if a poset $P$ is the poset of nonempty faces of a simplicial complex $K$, then $\Delta(P)$ is the barycentric subdivision of $K$. Since a simplicial complex and its barycentric subdivision are homeomorphic, $|\Sigma| \cong |\Delta(\Sigma)|$. Furthermore, this homeomorphism restricts to a homeomorphism on subcomplex $\Gamma \subseteq \Sigma$, so $|\Gamma| \cong |\Delta(\Gamma)|$.
\end{proof}

To understand how the contractibility of the geometric realization of a restricted preference domain relates to the existence of a social welfare function that satisfies unanimity, anonymity, and face-monotonicity, we introduce a useful theorem.

\begin{thm} \label{W and C}
Let $X$ be a finite CW complex.
\begin{enumerate}
\item[(i)] If for some integer $m \ge 2$ there exists a continuous, unanimous, and anonymous map $f_m : X^m \to X$, then every connected component of $X$ is contractible \cite[Theorem~1.1]{weinberger}.
\item[(ii)] Conversely, if every connected component of $X$ is
contractible, then for every $m \ge 2$ there exists a
continuous, unanimous, and anonymous map $f_m : X^m \to X$
\cite[Theorem~1 and Corollary~2]{ChichilniskyHeal}.
\end{enumerate}
\end{thm}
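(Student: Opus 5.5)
Part (ii) has a short direct construction. Part (i) is harder and needs algebraic topology, with one genuinely hard step. In both parts I first reduce to connected components.

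\emph{Reduction for (i).} Let $X_0$ be a component and let $x_0\in X_0$. Since $X_0^m$ is connected and contains $d(x_0)$, its image under $f_m$ is a connected set containing $f_m(d(x_0))=x_0$. Hence $f_m$ restricts to a continuous, unanimous, anonymous map $X_0^m\to X_0$. So for (i) I may assume $X$ is connected. By Whitehead's theorem it then suffices to show that all homotopy groups of $X$ vanish.

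\emph{Proof plan for (i).}
\begin{enumerate}
\item[1.] Let $\iota_j\colon X\to X^m$ be the inclusion of the $j$-th factor at the basepoint, and set $\varphi_j=(f_m\circ\iota_j)_*$ on $\pi_k(X)$. Anonymity gives $\varphi_1=\cdots=\varphi_m=:\varphi$. In $\pi_k(X^m)=\prod_j\pi_k(X)$ the images of the different factors commute. Since $\Delta=d$ satisfies $d_*(a)=\prod_j\iota_{j*}(a)$, unanimity yields $a=\varphi(a)^m$ for all $a$.
\item[2.] For $k=1$, elements of the form $\varphi(a)$ and $\varphi(b)$ commute, because they come from different factors. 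Since every $a$ equals $\varphi(a)^m$, this should show that $\pi_1(X)$ is abelian and that multiplication by $m$ on it is bijective.
\item[3.] $\pi_1(X)$ is finitely generated because $X$ is a finite CW complex. A finitely generated abelian group on which multiplication by $m$ is bijective has no free part, so $\pi_1(X)$ is finite of order prime to $m$.
\item[4.] Pass to the universal cover, or argue directly with homology. Take the lowest degree $k$ with $\tilde H_k(X;\mathbb F)\neq 0$. In that degree the Künneth cross terms in $d_*$ vanish, so $x=m\,\varphi(x)$ there. Taking $\mathbb F=\mathbb F_p$ with $p\mid m$ gives a contradiction, so $\tilde H_*(X;\mathbb F_p)=0$ for every $p\mid m$.
\end{enumerate}

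\emph{Main obstacle.} The argument in step 4 says nothing about rational homology or about primes not dividing $m$, since $m$ is invertible there. For this part I would follow \cite{weinberger}. The idea is to use that $f_m$ factors through the symmetric product $X^m/S_m$, whose rational homology is the $S_m$-invariant part of $H_*(X^m;\mathbb Q)$. I would then compare the induced maps on top nonvanishing rational homology, and on Euler characteristics, with the maps induced by iterated composites $f_m\circ(f_m\times\cdots\times f_m)$. This should force $\tilde H_*(X;\mathbb Q)=0$. Finally, combine the vanishing homology with the control of $\pi_1$ from step 3, via Hurewicz and Whitehead on the universal cover. This rational/prime-to-$m$ step is the part I expect to be hardest and least routine.

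\emph{Proof plan for (ii).} Write $X=\bigsqcup_c X_c$ with each $X_c$ contractible. A finite CW complex is a compact ENR, so a contractible one is an absolute retract. Therefore each $X_c$ embeds in some $\mathbb R^{N_c}$ with a retraction $r_c\colon\mathbb R^{N_c}\to X_c$. On $X_c^m$ define
\[
f_m(x_1,\ldots,x_m)=r_c\Big(\tfrac1m\textstyle\sum_i x_i\Big).
\]
This is continuous, symmetric in its arguments, and unanimous because $r_c$ fixes $X_c$.

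On a mixed product $X_{c_1}\times\cdots\times X_{c_m}$ with the $c_i$ not all equal, set $f_m$ to be constant. The constant value is a point of $X$ chosen to depend only on the multiset $\{c_1,\ldots,c_m\}$, which makes $f_m$ anonymous there. Since these products are open and closed in $X^m$, the resulting $f_m$ is continuous on all of $X^m$, unanimous, and anonymous, as \cite{ChichilniskyHeal} assert.
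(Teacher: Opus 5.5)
The paper gives no proof of this statement; it imports (i) from Weinberger and (ii) from Chichilnisky--Heal. Your part (ii) is a correct self-contained substitute. A contractible finite CW complex is an AR, so retracting the average works on each $X_c^m$, and the clopen mixed products can be sent to a constant that depends only on the multiset of components. Your reduction and steps 1--4 for (i) are also correct.

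However, (i) has a genuine gap, and it is larger than the rational step you flag. Suppose you had $\tilde H_*(X;\mathbb{Q})=0$ in addition to $\tilde H_*(X;\mathbb{F}_p)=0$ for $p\mid m$. Your final Hurewicz--Whitehead step would still not go through, because torsion at primes not dividing $m$ is never excluded. Take $m=2$ and the Moore space $S^2\cup_3 e^3$. It is simply connected, rationally acyclic and $\mathbb{F}_2$-acyclic, and its homotopy groups are finite $3$-groups, hence uniquely $2$-divisible. So it satisfies every conclusion of your steps 1--4 and of your rational step, yet it is not contractible. This is exactly where finiteness must enter essentially: for infinite complexes (i) is false. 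For example, a $K(\mathbb{Z}[1/m],k)$ realized from a simplicial $\mathbb{Z}[1/m]$-module carries the mean $m^{-1}(x_1+\cdots+x_m)$.

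Your rational sketch is also too vague to assess, and it would have to use integrality. On $H^*(S^3;\mathbb{Q})$ the assignment $x\mapsto m^{-1}\sum_j \mathrm{pr}_j^*x$ is already a symmetric ring map splitting the diagonal, so rational data alone does not rule out $S^3$. An integral version of your step 4 does handle the rational part. Let $N$ be the lowest positive degree with $H^N(X;\mathbb{Q})\neq 0$ and $L=H^N(X;\mathbb{Z})/\mathrm{tors}$. Modulo torsion, $H^N(X^m;\mathbb{Z})\cong L^{\oplus m}$ via the projections, and anonymity forces $f_m^*y\equiv\sum_j \mathrm{pr}_j^*\psi(y)$. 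Unanimity then gives $m\psi=\mathrm{id}_L$, which is impossible for $L\neq 0$.

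The missing idea for the prime-to-$m$ torsion is H-space theory. By your step 1, $g=f_m\circ\iota_1$ induces a bijection on every $\pi_k$, with inverse the $m$-th power, so $g$ is a homotopy equivalence of the component $X_0$. Let $\bar g$ be a homotopy inverse and set $\mu(x,y)=\bar g\bigl(f_m(x,y,x_0,\ldots,x_0)\bigr)$. Then $x_0$ is a two-sided homotopy unit for $\mu$, using $\mu(x_0,y)=\bar g g(y)$ by anonymity. Moreover $\mu$ is strictly commutative, again by anonymity. Thus $X_0$ is a connected finite homotopy-commutative H-space, and Hubbuck's torus theorem gives $X_0\simeq T^r$. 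By your step 2, $\pi_1(T^r)=\mathbb{Z}^r$ must be uniquely $m$-divisible, so $r=0$ and $X_0$ is contractible. This replaces steps 3--4 and the rational step. Without an input of this strength, your plan does not reach contractibility.
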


We introduce one more lemma before applying Theorem \ref{W and C} to our construction.

Recall that $\mathbf{Pos}$ is the category whose objects are posets and whose morphisms are order-preserving maps, $\mathbf{Asc}$ is the category whose objects are abstract simplicial complexes and whose morphisms are simplicial maps, and $\mathbf{Top}$ is the category whose objects are topological spaces and whose morphisms are continuous maps. Furthermore, taking the order complex of a poset defines the functor $\Delta: \mathbf{Pos} \to \mathbf{Asc}$. Similarly, taking the geometric realization of an abstract simplicial complex defines the functor $|\cdot|: \mathbf{Asc} \to \mathbf{Top}$. The composition of the two functors is again a functor:
\[
|\Delta(\cdot)|: \mathbf{Pos} \longrightarrow \mathbf{Top}.
\]

\begin{lem} \label{delta N}
Let $m \ge 2$ be a finite positive integer and $\{\pi_i: P^m \to P\mid 1 \leq i \leq m\}$ be the canonical projections. Then there exists a homeomorphism $h:|\Delta(P^m)| \to |\Delta(P)|^m$, and 
\[
h = \prod_{i=1}^{m}|\Delta(\pi_i)|.
\]
\end{lem}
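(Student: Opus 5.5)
The plan is to use the standard fact that the order complex of a product of posets triangulates the product of the realizations (the Eilenberg–Zilber/Walker theorem for finite posets), and to check that the specific map $h=\prod_i|\Delta(\pi_i)|$ is this homeomorphism. The map is well defined because each projection $\pi_i\colon P^m\to P$ is order-preserving. Functoriality of $|\Delta(\cdot)|$ then makes each $|\Delta(\pi_i)|$ continuous, so $h$ is continuous into the product, which carries the product topology. Since $P$ is finite, both $|\Delta(P^m)|$ and $|\Delta(P)|^m$ are compact Hausdorff (finite products of finite simplicial complexes). It therefore suffices to show that $h$ is a bijection, and by induction on $m$ (using $P^m\cong P^{m-1}\times P$) it suffices to treat the product of two finite posets $Q\times R$.

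\textbf{Describing $h$ explicitly.} A point of $|\Delta(Q\times R)|$ lies in the open simplex of a unique chain $(q_0,r_0)<\cdots<(q_k,r_k)$, with positive barycentric weights $\lambda_0,\dots,\lambda_k$. Its image under $h$ is the pair $\bigl(\sum_j\lambda_j q_j,\ \sum_j\lambda_j r_j\bigr)$. Here repeated $q_j$'s have their weights merged, which gives a point in the open simplex of the chain of distinct $q_j$'s, and similarly in $R$.

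\textbf{Inverse (the main step).} Take points $x=\sum_a\mu_a q_a$ in the open simplex of the chain $q_0<\cdots<q_s$ and $y=\sum_b\nu_b r_b$ in the open simplex of the chain $r_0<\cdots<r_t$. Form the cumulative weights $M_a=\mu_0+\cdots+\mu_a$ and $N_b=\nu_0+\cdots+\nu_b$ in $(0,1]$. Merge these into a single sorted list of distinct breakpoints $0<c_1<\cdots<c_\ell=1$. For each interval $(c_{j-1},c_j]$, record the pair $(q_{a(j)},r_{b(j)})$, where $a(j)$ and $b(j)$ are the indices whose cumulative intervals contain it. This produces a strictly increasing chain in $Q\times R$: consecutive pairs change in at least one coordinate, and each coordinate is monotone. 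The weights are $c_j-c_{j-1}$, and the resulting point maps to $(x,y)$ under $h$.

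Uniqueness comes from reversing the reasoning. Any preimage lies on a chain whose two projections are exactly the supports $\{q_a\}$ and $\{r_b\}$ with the merged weights. In a chain in the product, each coordinate is weakly increasing, so the cumulative weight at which each coordinate switches is forced to be $M_a$ (respectively $N_b$). This forces the chain and its weights to be the ones constructed above (the standard staircase, or shuffle, decomposition of $\Delta^s\times\Delta^t$).

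The main obstacle is this bijectivity argument, specifically the bookkeeping that a chain in $Q\times R$ is recovered uniquely from its two projected weighted chains. Once $h$ is shown to be a continuous bijection from a compact space to a Hausdorff space, it is a homeomorphism. The induction on $m$ then gives the statement, since the composite homeomorphism is again the product of the projection maps.
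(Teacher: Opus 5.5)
Your proof is correct. It has the same skeleton as the paper's: settle the two-factor case, then induct via $P^m\cong P^{m-1}\times P$. The substance differs, though. The paper's proof cites Barmak's Proposition 2.7.5 (for finite $T_0$ spaces) for $m=2$ and simply asserts the induction, whereas you prove the two-factor case yourself.

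Your argument is self-contained, and it verifies exactly what the proof of Theorem~\ref{main theorem} later relies on, namely that the homeomorphism is literally $\prod_i|\Delta(\pi_i)|$, so that $\lambda_i\circ h=|\Delta(\pi_i)|$. It runs as follows:
\begin{itemize}
\item You describe the map in barycentric coordinates.
\item You build the preimage by merging the cumulative weights $M_a$ and $N_b$ into a staircase chain in $Q\times R$.
\item You get uniqueness because each coordinate of a chain in the product is weakly increasing, so its switch points are forced to be the $M_a$ and $N_b$.
\item You finish with the compact-to-Hausdorff lemma, which spares you checking continuity of the inverse.
\end{itemize}

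You also make explicit two points the paper's one-line induction passes over. First, the base case is needed for two different finite posets $Q$ and $R$, since the inductive step pairs $P^{m-1}$ with $P$. Second, the composite homeomorphism is again the product of the induced projections, by functoriality.

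The citation buys brevity. Your route buys independence from the reference at the cost of the staircase bookkeeping. Since the same merging of cumulative weights works for $m$ chains at once, you could even drop the induction.
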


\begin{proof}
Barmak proved the $m=2$ case in \cite[Proposition~2.7.5, p.~31]{barmak}, when $P$ is a finite $T_0$ space. Since every finite poset can be identified with a finite $T_0$ space, the same argument can be applied and extended by induction. 
\end{proof}

\begin{thm} \label{main theorem}
Let $\mathcal{A}$ be a finite set of alternatives and $\mathcal{N}$ a finite set of agents, with $m := |\mathcal{N}| \geq 2$. Let $P$ be a common domain of weak preferences over $\mathcal{A}$ that excludes the totally indifferent preference. If there exists a unanimous, anonymous, and face-monotonic social welfare function
\[
F \colon P^m \longrightarrow P,
\]
then every connected component of $|\Delta(P)|$ is contractible.
\end{thm}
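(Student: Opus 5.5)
The plan is to turn $F$ into a continuous map on $|\Delta(P)|^m$ and then apply Theorem \ref{W and C}(i). Since $F$ is face-monotonic, it is an order-preserving map $P^m \to P$ for the coordinatewise order on $P^m$. Applying the functor $|\Delta(\cdot)|\colon \mathbf{Pos}\to\mathbf{Top}$ gives a continuous map $|\Delta(F)|\colon |\Delta(P^m)| \to |\Delta(P)|$. By Lemma \ref{delta N} we have a homeomorphism $h\colon |\Delta(P^m)| \to |\Delta(P)|^m$, so we define
\[
f_m := |\Delta(F)|\circ h^{-1}\colon |\Delta(P)|^m \longrightarrow |\Delta(P)|,
\]
which is continuous. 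The space $X=|\Delta(P)|$ is a finite simplicial complex because $P$ is finite, so it is a finite CW complex. It then remains to check that $f_m$ is unanimous and anonymous.

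For unanimity, let $d\colon P\to P^m$ be the diagonal. It is order-preserving and satisfies $F\circ d=\mathrm{id}_P$, so functoriality gives $|\Delta(F)|\circ|\Delta(d)|=\mathrm{id}$. By Lemma \ref{delta N}, $h\circ|\Delta(d)| = \prod_i |\Delta(\pi_i\circ d)| = \prod_i \mathrm{id}$, which is exactly the topological diagonal $D\colon X\to X^m$. Hence $f_m\circ D = |\Delta(F)|\circ h^{-1}\circ h\circ |\Delta(d)| = \mathrm{id}_X$.

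For anonymity, let $\sigma$ be a permutation of $\{1,\dots,m\}$. It induces a poset automorphism $s_\sigma$ of $P^m$ permuting coordinates, and anonymity of $F$ says $F\circ s_\sigma = F$. Let $t_\sigma$ be the corresponding coordinate permutation of $X^m$. We must show $h\circ|\Delta(s_\sigma)| = t_\sigma\circ h$. This follows from $\pi_i\circ s_\sigma=\pi_{\sigma(i)}$, functoriality, and the product formula $h=\prod_i|\Delta(\pi_i)|$. Then
\[
f_m\circ t_\sigma = |\Delta(F)|\circ h^{-1}\circ t_\sigma = |\Delta(F)|\circ |\Delta(s_\sigma)|\circ h^{-1} = |\Delta(F\circ s_\sigma)|\circ h^{-1} = f_m.
\]

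Theorem \ref{W and C}(i) now applies and shows that every connected component of $|\Delta(P)|$ is contractible. The main point needing care is that the homeomorphism $h$ is natural with respect to the diagonal and to coordinate permutations; otherwise unanimity and anonymity would not transfer from $F$ to $f_m$. The explicit formula $h=\prod_i|\Delta(\pi_i)|$ in Lemma \ref{delta N} gives this naturality directly, since both compatibilities reduce to identities among the projections $\pi_i$.
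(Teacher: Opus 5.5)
Your proposal is correct and follows the paper's proof essentially step for step: define $f=|\Delta(F)|\circ h^{-1}$ via Lemma~\ref{delta N}, transfer unanimity and anonymity through the formula $h=\prod_i|\Delta(\pi_i)|$ and the identities $\pi_i\circ d=\mathrm{id}_P$ and $\pi_i\circ s_\sigma=\pi_{\sigma(i)}$, and conclude with Theorem~\ref{W and C}(i), since $|\Delta(P)|$ is a finite CW complex. You are right that the compatibility of $h$ with the diagonal and with coordinate permutations is the crux, and the paper checks it in the same way, by composing with the projections.
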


\begin{proof}

First, we show that if $F: P^m \to P$ is a face-monotonic social welfare function, then there exists a continuous map $f: |\Delta(P)|^m \to |\Delta(P)|$. Since $P^m$ and $P$ can be viewed as posets, and face-monotonicity is equivalent to being order-preserving, $F: P^m \to P$ is a morphism in $\mathbf{Pos}$. By Lemma \ref{delta N}, there exists a homeomorphism $h: |\Delta(P^m)| \to |\Delta(P)|^m$. Thus, the following diagram commutes:

\[
\xymatrix{
|\Delta(P^m)|
    \ar[r]^{|\Delta(F)|}
    \ar[d]_{h}
&
|\Delta(P)|
\\
|\Delta(P)|^m
    \ar[ur]_{f}
&
}
\]

where
\[
f = |\Delta(F)|\circ h^{-1}.
\]

Since $|\Delta(F)|$ and $h^{-1}$ are continuous, $f$ is continuous.

Next, we will show $f$ is unanimous. Define $d: P \to P^m$ to be the diagonal map between $P$ and $P^m$ and $\delta : |\Delta(P)| \to |\Delta(P)|^m$ to be the diagonal map between $|\Delta(P)|$ and $|\Delta(P)|^m$. We also define the projection maps $\lambda_i: |\Delta(P)|^m \to |\Delta(P)|$ for every $1 \leq i \leq m$. Then we know that $\lambda_i \circ \delta = \mathrm{id}_{|\Delta(P)|}$ for all $i$. We also have
\[
\lambda_i \circ (h \circ |\Delta(d)|) = |\Delta(\pi_i)| \circ |\Delta(d)| = |\Delta(\pi_i \circ d)| = |\Delta(\mathrm{id}_{P})| = \mathrm{id}_{|\Delta(P)|}.
\]
Thus, $\lambda_i \circ \delta = \lambda_i \circ (h \circ |\Delta(d)|)$, and $\delta = h \circ |\Delta(d)|$. Since $F$ is unanimous, $F \circ d = \mathrm{id}_{P}$. Then
\[
f \circ \delta = f \circ h \circ |\Delta(d)| = |\Delta(F)| \circ |\Delta(d)| = |\Delta(F \circ d)| = |\Delta(\mathrm{id}_{P})| = \mathrm{id}_{|\Delta(P)|},
\]
proving that $f$ is unanimous.

Third, we prove that $f$ is anonymous. For every $\sigma \in S_m$, let $\rho_\sigma (A_1, \dots, A_m) = (A_{\sigma(1)}, \dots, A_{\sigma(m)})$ be an automorphism of the poset $P^m$. This automorphism satisfies $\pi_i \circ \rho_\sigma = \pi_{\sigma(i)}$. Similarly, define $\tau_\sigma :|\Delta(P)|^m \to |\Delta(P)|^m$ to be the corresponding coordinate permutation, so $\lambda_i \circ \tau_\sigma = \lambda_{\sigma(i)}$. For each $i$:
\[
\lambda_i \circ (h \circ |\Delta(\rho_\sigma)|) = |\Delta(\pi_i)| \circ |\Delta(\rho_\sigma)| = |\Delta(\pi_{\sigma(i)})| = \lambda_{\sigma(i)} \circ h = \lambda_i \circ \tau_\sigma \circ h.
\]
Thus, we know that $h \circ |\Delta(\rho_\sigma)| = \tau_\sigma \circ h$. Equivalently, $h^{-1} \circ \tau_\sigma = |\Delta(\rho_\sigma)| \circ h^{-1}$. Since $F$ is anonymous, $F \circ \rho_\sigma = F$ for every $\sigma$. Hence, 
\[
f \circ \tau_\sigma = |\Delta(F)| \circ h^{-1} \circ \tau_\sigma = |\Delta(F)| \circ |\Delta(\rho_\sigma)| \circ h^{-1} = |\Delta(F \circ \rho_\sigma)| \circ h^{-1} = |\Delta(F)| \circ h^{-1} = f,
\]
proving that $f$ is anonymous.

From the above argument, we know that if there exists a unanimous, anonymous, and face-monotonic social welfare function $F: P^m \to P$, then there exists a unanimous, anonymous, and continuous map $f: |\Delta(P)|^m \to |\Delta(P)|$. Since $P$ is a finite poset, its geometric realization is a finite CW complex; thus, by Theorem \ref{W and C}, every connected component of $|\Delta(P)|$ is contractible.
\end{proof}

We now provide an example to show the significance of face-monotonicity.

\begin{exmp}
Consider the case of a committee of three people that uses Borda count to rank three proposals $a$, $b$, and $c$ with the two highest-ranked proposals advancing to a final vote. Each committee member assigns $2$, $1$, and $0$ points to their first-, second-, and third-ranked alternatives. Tied proposals receive the average of the points assigned to their corresponding positions. The case when proposals receive equal points is resolved according to the fixed priority
\[ a \mathrel{\triangleright} b \mathrel{\triangleright} c. \]

Consider the preference profile
\[ 
\mathbf{A} = \bigl( a\succ b\succ c,\; b\succ c\succ a,\; c\succ a\succ b \bigr). 
\] 

The corresponding Borda scores are 
\[ 
\begin{array}{c|ccc} & a & b & c \\ \hline A_1 & 2 & 1 & 0 \\ A_2 & 0 & 2 & 1 \\ A_3 & 1 & 0 & 2 \\ \hline \text{Total} & 3 & 3 & 3 \end{array} 
\] 
so the fixed priority rule gives 
\[
F(\mathbf{A})=a\succ b\succ c. 
\]

Now suppose that member $1$ weakens his reported preference to 
\[
A_{1}' = a \sim b \succ c,
\]
and the other two members’ preferences remain unchanged. The preference profile changes to 
\[ 
\mathbf{A}' = \bigl( a\sim b\succ c,\; b\succ c\succ a,\; c\succ a\succ b \bigr). 
\]
Because $A_1'\leq A_1$, we have $\mathbf{A}'\leq\mathbf{A}$. The new Borda scores are 
\[ 
\begin{array}{c|ccc} & a & b & c \\ \hline A_1' & 1.5 & 1.5 & 0 \\ A_2 & 0 & 2 & 1 \\ A_3 & 1 & 0 & 2 \\ \hline \text{Total} & 2.5 & 3.5 & 3 \end{array} 
\]
and hence \[ 
F(\mathbf{A}')=b\succ c\succ a. 
\]
However, $b\succ c\succ a \not\leq a\succ b\succ c$, so this Borda count rule is not face-monotonic.

Suppose that member $1$'s genuine preference is $a \sim b \succ c$. If member $1$ reports his genuine preference, then the social ranking is $b \succ c \succ a$, so proposals $b$ and $c$ advance to the final round. If member $1$ instead reports the strict refinement $a\succ b\succ c$, the social ranking becomes $a\succ b\succ c$, so $a$ and $b$ advance. Since member $1$ prefers the finalist set $\{a, b\}$ to $\{b, c\}$, the rule gives member $1$ an incentive to report a strict comparison between two proposals he genuinely regards as equal. 

Face-monotonicity prevents an aggregation rule from rewarding strategic polarization merely because a voter truthfully reports a moderate preference.
\end{exmp}

\section{Fixed-Extrema Subcomplex and Its Application} \label{section 4}

In this section, we consider the restricted preference domains in which each agent has a fixed most- or least-preferred alternative. We prove that the realizations of the fixed-extrema domains are contractible and construct a unanimous, anonymous, face-monotonic social welfare function for these domains. Then we apply the results on fixed-extrema preference domains to the housing allocation problem. 

\subsection{Contractibility of Fixed-Extrema Restricted Preference Domains} \label{section on extreme preference}

We assume the agent has an absolute least-preferred alternative throughout this subsection. Then by symmetry, we get the same result for the most-preferred case. Suppose $\mathcal{A} = \{a_1, \dots, a_n\}$ is a finite set of alternatives. Without loss of generality, we assume that $a_n$ is the absolute least-preferred alternative in $\mathcal{A}$, i.e., $a_i \succ a_n$ for all $a_i \in \mathcal{A} \setminus \{a_n\}$. 

We denote by $P^{\min}$ the poset of all weak preferences for which $a_n$ is the absolute least-preferred alternative, ordered by refinement as defined in the previous section. Then the order complex of $P^{\min}$, $\Delta(P^{\min})$, is the abstract simplicial complex that has all preferences in $P^{\min}$ as vertices and chains as simplices. We will show that $|\Delta(P^{\min})|$ is contractible by proving that $\Delta(P^{\min})$ is a cone. 

Recall that an abstract simplicial complex $K$ is a \emph{cone} with apex $v$ if and only if there exists a vertex $v$ such that for every simplex $\sigma \in K$, $\sigma \cup \{v\} \in K$. We denote by $A_0 := \{a_1 \sim \cdots \sim a_{n-1}\} \succ a_n$ the weak preference that the agent is indifferent among $\{a_1, \dots, a_{n-1}\}$, but strictly prefers each of them over $a_n$. Thus, $A_0$ corresponds to the cell $\{x \in V \mid x_1 = \cdots =x_{n-1} > x_n\}$. We show $\Delta(P^{\min})$ is a cone with apex $A_0$.

\begin{prop}
The abstract simplicial complex $\Delta(P^{\min})$ is a cone, and its geometric realization $|\Delta(P^{\min})|$ is contractible.
\end{prop}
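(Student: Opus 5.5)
The plan is to show that $A_0$ is comparable to every element of $P^{\min}$. More precisely, we aim to prove $A_0 \leq A$ for every $A \in P^{\min}$. If this holds, then adding $A_0$ to any chain of $P^{\min}$ again gives a chain. So for every simplex $\sigma\in\Delta(P^{\min})$ we get $\sigma\cup\{A_0\}\in\Delta(P^{\min})$, which makes $\Delta(P^{\min})$ a cone with apex $A_0$.

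The key step is the comparison $A_0\leq A$, checked through sign sequences. Fix $A\in P^{\min}$ and a pair $i<j$, and write $\sigma_{ij}(A)$ for its sign.
\begin{enumerate}
\item If $i,j\leq n-1$, then $\sigma_{ij}(A_0)=0$, since $A_0$ has $x_1=\cdots=x_{n-1}$. The face condition holds trivially for this pair.
\item If $j=n$, then $\sigma_{in}(A_0)=+$ because $x_i>x_n$ on $A_0$. Also $\sigma_{in}(A)=+$, since $a_i\succ a_n$ in every preference of $P^{\min}$. So $\sigma_{in}(A_0)=\sigma_{in}(A)$.
\end{enumerate}
By the definition of the face relation, $A_0\leq A$. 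We also need $A_0\in P^{\min}$, which is immediate, and $A_0$ is not the excluded totally indifferent preference. In ordered-partition language, $A_0=(\{a_1,\dots,a_{n-1}\}\mid\{a_n\})$ is obtained from any $A=(B_1|\cdots|B_{k-1}|\{a_n\})$ by merging the blocks $B_1,\dots,B_{k-1}$.

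Contractibility then follows by a standard argument. The geometric realization of a cone $K$ with apex $v$ deformation retracts to $v$ via the straight-line homotopy $H(x,t)=(1-t)x+tv$. This homotopy is well defined because, for any point $x$ in the closed simplex $|\sigma|$, the segment from $x$ to $v$ lies in the closed simplex $|\sigma\cup\{v\}|$, which belongs to $|K|$. Hence $|\Delta(P^{\min})|$ is contractible.

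There is no serious obstacle. The only care needed is the sign bookkeeping for pairs involving $a_n$, and the observation that $P^{\min}$ never contains the totally indifferent preference, so Remark~\ref{rem: excluding empty cell} is respected. The most-preferred case follows by the symmetric argument with apex $a_1\succ\{a_2\sim\cdots\sim a_n\}$, after relabeling.
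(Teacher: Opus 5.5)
Your proposal is correct and follows essentially the same route as the paper: show that $A_0$ lies below every element of $P^{\min}$, so adjoining it to any chain gives a chain and $\Delta(P^{\min})$ is a cone with apex $A_0$. Your sign-sequence check of $A_0\leq A$ is more explicit than the paper's appeal to refinement, and your straight-line homotopy spells out the contractibility of a cone, which the paper only cites.
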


\begin{proof}

Take an arbitrary simplex $\sigma \in \Delta(P^{\min})$. If
$A_0 \in \sigma$, then $\sigma \cup \{A_0\}=\sigma \in \Delta(P^{\min})$. Suppose that $A_0 \notin \sigma$. We want to show $\sigma \cup A_0 \in \Delta(P^{\min})$. Thus, it suffices to show $\sigma \cup \{A_0\}$ is a chain. Since every $F \in \sigma$ satisfies $x_i > x_n$ for all $1 \leq i \leq n-1$, and $A_0$ is not in $\sigma$, there must exist a pair $(j, k)$ such that $x_j > x_k > x_n$. $F$ is by definition more refined than $A_0$, so $A_0 < F$ for every $F \in \sigma$. Because $\sigma$ is already a chain and $A_0 < F$ for all $F \in \sigma$, $\sigma \cup A_0$ is a chain and a simplex in $\Delta(P^{\min})$. 

The geometric realization of a cone is again a cone, and a cone is known to be contractible, so $|\Delta(P^{\min})|$ is contractible.
\end{proof}

Assume that every agent has the same fixed-extrema domain $P^{\min}$ over a finite set of alternatives $\mathcal{A} = \{a_1 ,\dots, a_n\}$. We now define a social welfare function $F\colon (P^{\min})^{|\mathcal{N}|} \to P^{\min}$ that is unanimous, anonymous, and face-monotonic. Without loss of generality, we will again assume that $P^{\min}$ is the poset of all weak preferences for which $a_n$ is the absolute least-preferred alternative. Recall that the \emph{meet} $A \wedge B$ of two elements in a poset is their greatest lower bound. By Proposition \ref{greatest lower bound}, we know $P^{\min}$ is closed under meets.

\begin{prop} \label{swf for fixed-extrema}
Let $m=|\mathcal{N}|$, and suppose that every agent's preference domain is
$P^{\min}$, the poset of all weak preferences over $\mathcal{A}$ for which $a_n$
is the absolute least-preferred alternative; that is, $a_i\succ a_n$ for
every $1\leq i\leq n-1$. Define
\[
F\colon (P^{\min})^m\longrightarrow P^{\min},
\qquad
F(A_1,\ldots,A_m):=A_1\wedge\cdots\wedge A_m.
\]
Then $F$ is unanimous, anonymous, and face-monotonic.
\end{prop}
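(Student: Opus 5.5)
The plan is to verify three things in turn: that $F$ is well defined as a map into $P^{\min}$, that it is unanimous and anonymous, and that it is face-monotonic. All of the order-theoretic content comes from general facts about meets in the poset $\Sigma$, together with the observation already used in the previous proposition that $A_0$ lies below every element of $P^{\min}$.

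\emph{Well-definedness.} By Proposition~\ref{greatest lower bound}, any two cells of $\Sigma$ have a greatest lower bound. By induction, and using associativity of binary meets, the meet $M := A_1\wedge\cdots\wedge A_m$ exists in $\Sigma$ for any finite family. I must check that $M\in P^{\min}$ and that $M$ is not the totally indifferent preference. In the proof that $\Delta(P^{\min})$ is a cone, we showed $A_0\le A$ for every $A\in P^{\min}$. Hence $A_0$ is a common lower bound of $A_1,\dots,A_m$, so $A_0\le M$. Since $a_i\succ a_n$ in $A_0$ for every $i<n$, the sign $\sigma_{in}(A_0)$ is nonzero. The definition of the face relation then forces $\sigma_{in}(M)=\sigma_{in}(A_0)$, that is, $a_i\succ a_n$ in $M$ for all $i\le n-1$. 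So $M\in P^{\min}$, and in particular $M$ is not the totally indifferent preference excluded by Remark~\ref{rem: excluding empty cell}. Because $P^{\min}$ carries the induced order and $M\in P^{\min}$, $M$ is also the meet of the $A_i$ computed inside $P^{\min}$. This step is the main obstacle, since it is the only place where the specific domain matters; everything else holds for meets in any poset.

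\emph{Unanimity and anonymity.} In any poset $A\wedge A=A$, and hence $F(A,\dots,A)=A\wedge\cdots\wedge A=A$. The meet of a finite family is characterized as the greatest element of the set of common lower bounds, and that set depends only on the underlying set $\{A_1,\dots,A_m\}$. Therefore permuting the arguments does not change $M$, which gives $F\circ\rho_\sigma=F$ for every $\sigma\in S_m$.

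\emph{Face-monotonicity.} Suppose $(A_1',\dots,A_m')\le(A_1,\dots,A_m)$ coordinatewise, and set $M'=\bigwedge_i A_i'$. Then for each $i$,
\[
M'\le A_i'\le A_i.
\]
So $M'$ is a common lower bound of $A_1,\dots,A_m$, and therefore $M'\le\bigwedge_i A_i$ by the defining property of the greatest lower bound. This is exactly the order-preserving condition, so $F$ is face-monotonic.
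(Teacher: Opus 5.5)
Your proof is correct and follows the paper's argument. Unanimity and anonymity come from elementary properties of meets, and face-monotonicity holds because $\bigwedge_i A_i'$ is a common lower bound of the $A_i$. Your check that $A_0\le M$ forces $\sigma_{in}(M)=+$, and hence $M\in P^{\min}$, is a useful addition: the paper infers that $P^{\min}$ is closed under meets only from the existence of meets in $\Sigma$ (Proposition~\ref{greatest lower bound}), and that fact alone does not imply it.
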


\begin{proof}
$F$ is unanimous and anonymous by construction. Suppose $(B_1, \dots, B_m) \leq (A_1, \dots, A_m)$. Then we know that $B_i \leq A_i$ for all $1\leq i \leq m$. Thus, $B_1 \wedge \cdots \wedge B_m \leq A_1\wedge\cdots\wedge A_m$, proving that $F$ is face-monotonic.
\end{proof}

\subsection{An Application to Housing Allocation}

In this subsection, we consider an application of our construction to the housing allocation problem. 

First, we assume that there are $n$ agents $\mathcal{N} = \{1, \dots, n\}$ and $n$ houses $\mathcal{O} = \{o_1, \dots, o_n\}$, so the cardinalities of the set of agents and the set of houses are the same. We define an allocation of houses as a surjection $\mu: \mathcal{N} \to \mathcal{O}$ such that $\mu(i) \neq \mu(j)$ for all $i \neq j$. Thus, when $|\mathcal{N}| = |\mathcal{O}|$, $\mu$ is a bijection. When there are $n$ agents and $n$ houses, we know that there are $n!$ allocations. We now take these allocations as the set of alternatives $\mathcal{A} = \{\mu\}$ and study the agent's preference over $\mathcal{A}$.  

Although the agent may formally have preferences over all allocations in $\mathcal A$, it is natural to assume that the agent cares only about the house assigned to them. Under this assumption, the agent's preferences over allocations are induced by their preferences over $\mathcal{O}$. The agent compares two allocations $\mu$ and $\nu$ by comparing the houses $\mu(i)$ and $\nu(i)$. This gives a natural restriction of the agent’s preference domain over $\mathcal A$. We study the contractibility of the geometric realization of this restricted domain.

Let $|\mathcal{N}| = |\mathcal{O}| = n$. We will fix an arbitrary agent $i$ throughout the analysis. From Proposition \ref{prop:preference-complex-realization}, we know that the geometric realization of the full preference complex $|\Sigma_{\mathcal{A}}|$ of agent $i$ over the housing allocations $\mathcal{A}$ is homeomorphic to $S^{n! - 2}$. Assume agent $i$ has full preferences over the houses $\mathcal{O} = \{o_1 \dots o_n\}$. We define 
\[
[o_j] := \{\mu \in \mathcal{A} \mid \mu(i) = o_j\},
\]
so each $[o_j]$ is the set of all allocations that assign $o_j$ to agent $i$. When agent $i$ cares only about their own assigned houses, agent $i$ is indifferent among all allocations in the same class $[o_j]$. Intuitively, $[o_j]$ can be viewed as an equivalence class. Thus, every preference over the housing allocations in the restricted domain is induced by a weak preference over $\{[o_1], \dots, [o_n]\}$. Then the canonical projection that maps $o_j$ to $[o_j]$ is a bijection between $\mathcal{O}$ and $\{[o_1], \dots, [o_n]\}$. Through the canonical projection, the restricted preference domain over $\mathcal{A}$ is in bijection with the set of all weak preferences over $\mathcal{O}$. 

At the cell and simplex levels, let $\Sigma_r \subset \Sigma_{\mathcal{A}}$ be the set of cells in $\Sigma_{\mathcal{A}}$ that correspond to the preferences in the restricted domain. Then each cell in $\Sigma_r$ corresponds to an ordered partition $\pi = ([D_1]| \cdots |[D_m])$ over $\mathcal{A}$, and each $[D_k]$ is a union of $[o_j]$. Since faces of cells are obtained by repeatedly merging adjacent blocks of
the corresponding ordered partitions, and a union of blocks that are each
unions of the $[o_j]$ is again a union of the $[o_j]$, every face of a cell
in $\Sigma_r$ lies in $\Sigma_r$, proving that $\Sigma_r$ is a subcomplex
of $\Sigma_{\mathcal{A}}$. 

Moreover, we can use the canonical projection to define a face-order-preserving bijection between the cells in $\Sigma_{\mathcal{O}}$ and cells in $\Sigma_r$ through their corresponding ordered partitions of ${\mathcal{O}}$ and ${\mathcal{A}}$. Thus, we have a well-defined simplicial isomorphism between $\Sigma_{\mathcal{O}}$ and $\Sigma_r$, which induces a homeomorphism between $|\Sigma_{\mathcal{O}}|$ and $|\Sigma_r|$. Because $|\Sigma_{\mathcal{O}}| \cong S^{n-2}$, $|\Sigma_r| \cong S^{n-2}$, so $|\Sigma_r|$ is not contractible. By Theorem \ref{main theorem}, when $n \ge 3$, there does not exist a social welfare function $F: (\Sigma_{r})^n \to \Sigma_{r}$ that is unanimous, anonymous, and face-monotonic. 

Now we consider the case when $|\mathcal{N}| > |\mathcal{O}|$. Then for every housing allocation $\mu$, there are $|\mathcal{N}| - |\mathcal{O}|$ agents who are not assigned houses. Then for each agent, there are $|\mathcal{O}| + 1$ potential cases for each allocation: the cases in which the agent is assigned to one of the houses in $\mathcal{O}$ and the case in which the agent is unmatched. Consider the case in which the agents only care about their own assignment under an allocation. In this case, each agent's restricted preference domain over $\mathcal{A}$ is induced by the agent's preference domain over $\mathcal{O} \cup \{\varnothing\}$, where $\varnothing$ means the agent is not assigned a house. Suppose that $\varnothing$ is the absolute least-preferred alternative in $\mathcal{O} \cup \{\varnothing\}$ for all agents. Then we know from Section \ref{section on extreme preference} that the geometric realization of each agent's restricted preference domain over $\mathcal{O} \cup \{\varnothing\}$ is contractible. Using an argument similar to that in the previous case, we know that when an arbitrary agent only cares about their own assignment in a housing allocation, and being unmatched is the least-preferred alternative, the geometric realization of the agent's restricted preference domain over $\mathcal{A}$ is contractible. However, the restricted preference domain over $\mathcal{A}$ induced from the agent's preferences over $\mathcal{O} \cup \{\varnothing\}$ does not have an absolute least-preferred housing allocation, so the map defined in Proposition \ref{swf for fixed-extrema} can not be applied here.

\section{The Single-Peaked Subcomplex and Its Contractibility} \label{section 5}

In this section, we study the restricted domain $P_{SP}$ of single-peaked preferences. Whereas the fixed-extrema domain was analyzed directly through its poset structure, here we study the cells corresponding to single-peaked preferences as a subcomplex $\Sigma_{SP}$ of the preference complex $\Sigma$. We show that $|\Sigma_{SP}|$, and hence $|\Delta(P_{SP})|$, is contractible. Finally, when $|\mathcal{A}|=3$, we construct, for every $|\mathcal{N}|\geq 2$, a social welfare function on $P_{SP}$ that is unanimous, anonymous, and face-monotonic.

Given a finite set of alternatives $\mathcal{A} = \{a_1, \dots, a_n\}$ with a strict linear ordering $\{a_1 < \cdots < a_n\}$, we say a subset $I \subseteq \mathcal{A}$ is an \emph{interval} if $I = [a_p, a_q] = \{a_p, a_{p+1}, \dots, a_{q-1}, a_q\}$ for some $p \leq q$. For our purposes, we also define $\varnothing$ as an interval.

\begin{defn}
A preference relation $\succeq$ is single-peaked if for every
alternative $a_m\in \mathcal{A}$, the strict upper level set
\[
U(a_m):=\{a_i\in \mathcal{A} \mid a_i\succ a_m\}
\]
is an interval with respect to the linear ordering
\[
a_1<a_2<\cdots<a_n.
\]
\end{defn}

We call $P_{SP}$ the set containing all single-peaked preferences and $\Sigma_{SP}$ the set containing all corresponding cells of the single-peaked preferences. Intuitively, a single-peaked preference has one most-preferred region, or ``peak,'' along the fixed ordering of alternatives. As one moves away from this peak in either direction, alternatives become weakly less preferred. Tax rates provide an example for which we can assume that people roughly have single-peaked preferences. In this case, the set of alternatives is the entire unit interval with the usual strict ordering of real numbers. Each person may have an ideal tax rate, or an ideal range of tax rates. As the tax rate moves away from this ideal point or range in either direction, the person's preference decreases because the alternative becomes further from what they consider optimal.

Since we know that the set of single-peaked preferences over $\mathcal{A}$ is a subset of the set of all preferences over $\mathcal{A}$ ($P_{SP} \subset P$), we ask a natural question: do we have a well-defined subcomplex of the preference complex, $\Sigma_{SP} \subset \Sigma$, such that this subcomplex corresponds to the single-peaked preferences? We will show that the answer to this question is affirmative and that the geometric realization of the single-peaked subcomplex $|\Sigma_{SP}|$ is contractible. Since $\Sigma_{SP}$ is a subcomplex of $\Sigma$, and by Proposition \ref{homeo of geom of order complex and face complex} $|\Sigma_{SP}| \cong |\Delta(P_{SP})|$, $|\Delta(P_{SP})|$ is contractible. We provide the geometric visualizations for the $n = 3$ and $n = 4$ cases. 

From Figures \ref{fig:single-peaked prefernce n = 3} and \ref{fig: single-peaked preferenc n = 4}, readers can verify that $\Sigma_{SP}$ is a well-defined subcomplex of $\Sigma$. Furthermore, as shown in Figure \ref{fig:single-peaked prefernce n = 3}, the geometric realization of $\Sigma_{SP}$ is a connected arc that is contractible.

\begin{figure} [htbp]
    \centering
    \includegraphics[width=1\linewidth]{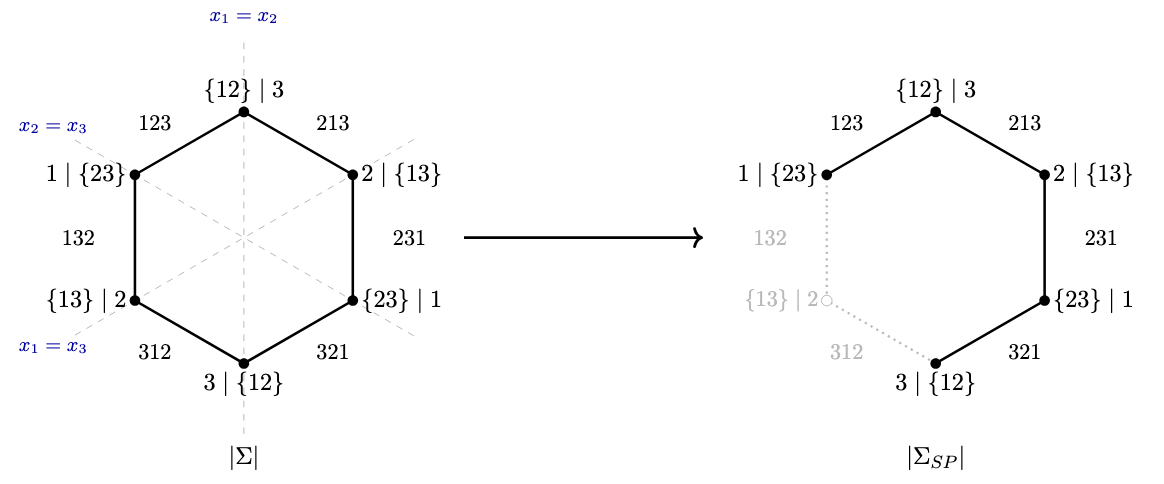}
    \caption{The single-peaked restriction for $n=3$: deleting the 1-simplices $(132)$ and $(312)$ together with the vertex $(\{13\}\mid 2)$ collapses $|\Sigma|$ to the contractible arc $|\Sigma_{SP}|$.}
    \label{fig:single-peaked prefernce n = 3}
\end{figure}

\begin{figure} [htbp]
    \centering
    \includegraphics[width=0.7\linewidth]{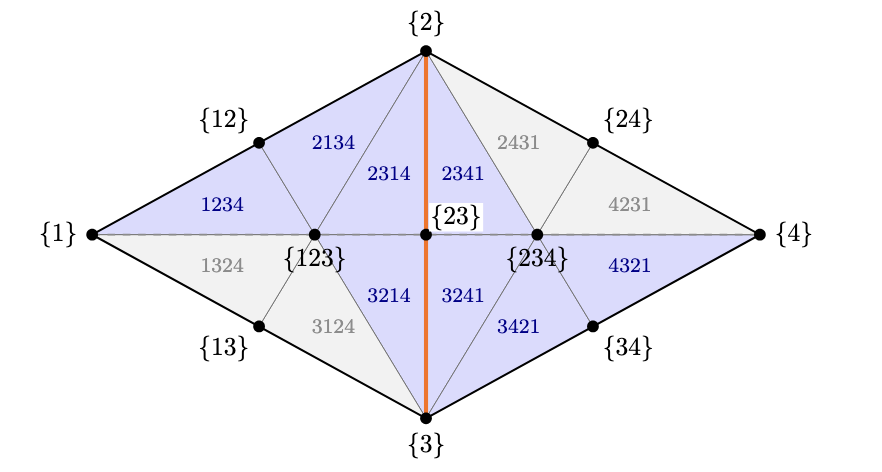}
    \caption{The single-peaked restriction for $n=4$: this consists of the two front faces of the boundary of a tetrahedron.}
    \label{fig: single-peaked preferenc n = 4}
\end{figure}

\begin{thm} \label{single-peakedness is contractible}
Let $|\mathcal{A}| \ge 3$. The geometric realization $|\Sigma_{SP}|$ of the single-peaked preference complex is contractible.
\end{thm}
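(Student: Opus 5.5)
The plan is to translate $\Sigma_{SP}$ into the order complex of a poset of intervals, and then contract that poset using the standard homotopy principle for poset maps comparable to the identity. Throughout, write $[p,q]$ for the interval $\{a_p,\dots,a_q\}$.

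\textbf{Step 1: $\Sigma_{SP}$ is a subcomplex.} A cell with ordered partition $\pi=(B_1|\cdots|B_k)$ determines the flag of upper sets
\[
U_1=B_1\subsetneq U_2=B_1\cup B_2\subsetneq\cdots\subsetneq U_{k-1}\subsetneq\mathcal{A}.
\]
For $a\in B_{j+1}$ the strict upper level set is $U(a)=U_j$, and $U(a)=\varnothing$ for $a\in B_1$. Hence $\pi$ is single-peaked if and only if every $U_j$, $1\le j\le k-1$, is an interval. A face of $\pi$ is obtained by merging adjacent blocks, and its flag of upper sets is a subflag of the flag of $\pi$. So faces of single-peaked cells are single-peaked, and $\Sigma_{SP}$ is a subcomplex of $\Sigma$.

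\textbf{Step 2: identify $\Sigma_{SP}$ with an order complex.} The vertices of $\Sigma$ are the two-block preferences $(U|\mathcal{A}\setminus U)$, one for each proper nonempty subset $U$. A cell is the simplex spanned by the vertices $U_1,\dots,U_{k-1}$ of its flag. Therefore $\Sigma_{SP}$ is isomorphic to $\Delta(\mathcal{I})$, where $\mathcal{I}$ is the poset of proper nonempty intervals of $a_1<\cdots<a_n$ ordered by inclusion. It then suffices to show that $|\Delta(\mathcal{I})|$ is contractible. Equivalently, by Proposition \ref{homeo of geom of order complex and face complex}, this also gives the contractibility of $|\Delta(P_{SP})|$.

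\textbf{Step 3: remove the vertex $\{a_n\}$.} In $\Delta(\mathcal{I})$, the link of the vertex $\{a_n\}$ is the order complex of the intervals strictly containing it, namely $[p,n]$ for $2\le p\le n-1$. These form a chain, which is nonempty because $n\ge 3$. So the link is a simplex and hence contractible. Removing a vertex with contractible link does not change the homotopy type, since the closed star is a cone glued along the link. Thus $|\Delta(\mathcal{I})|\simeq|\Delta(\mathcal{I}')|$, where $\mathcal{I}'=\mathcal{I}\setminus\{\{a_n\}\}$.

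\textbf{Step 4: contract $\mathcal{I}'$.} Define two maps on $\mathcal{I}'$:
\[
g([p,q])=[p,\min(q,n-1)], \qquad f([p,q])=[1,q].
\]
The interval $g([p,q])$ is nonempty because $\{a_n\}$ has been removed, and it is proper. The map $g$ is order-preserving, satisfies $g\le\mathrm{id}$, and lands in the subposet $\mathcal{J}$ of intervals not containing $a_n$. On $\mathcal{J}$, the map $f$ is order-preserving, satisfies $f\ge\mathrm{id}$, and lands in the chain $\{[1,q]\mid q\le n-1\}$.

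We now use the standard fact: if $\phi,\psi$ are order-preserving maps with $\phi\le\psi$, then $|\Delta(\phi)|\simeq|\Delta(\psi)|$. Applying it to $g\le\mathrm{id}$ makes $\mathcal{J}$ a strong deformation retract of $\mathcal{I}'$ up to homotopy. Applying it to $f\ge\mathrm{id}$ shows that $\mathcal{J}$ is homotopy equivalent to a chain, whose order complex is a simplex and hence contractible.

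\textbf{Main obstacle.} The delicate point is Step 4 together with the need for Step 3. The natural retractions, such as merging top or bottom blocks, are not order-preserving on $\mathcal{I}$: two-block preferences are pairwise incomparable, and the truncation $g$ is undefined on $\{a_n\}$. Removing $\{a_n\}$ first through the link argument is what makes the comparable-maps lemma applicable. I would also double-check carefully the claim that the link of $\{a_n\}$ consists only of the chain $[p,n]$, which holds because no proper nonempty interval lies strictly below a singleton.
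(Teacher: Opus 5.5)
Your proof is correct, and it takes a genuinely different route from the paper's.

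The paper argues geometrically. It uses the realization homeomorphism $\phi\colon|\Sigma|\to S(V)$ to identify $|\Sigma_{SP}|$ with $\bigcup_p\bigl(U_p\cap S(V)\bigr)$, and covers $\Sigma_{SP}$ by the subcomplexes $K_p$ of preferences peaked at $p$. Each intersection $U_I\cap S(V)$ is either empty (exactly when $\{1,n\}\subseteq I$) or contractible, because $U_I$ is a pointed convex cone. The nerve theorem then reduces everything to the nerve $\Delta(1,\ldots,n-1)\cup\Delta(2,\ldots,n)$, which is a cone.

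You stay entirely combinatorial. Encoding a preference by its flag of strict upper sets identifies $\Sigma$ with the order complex of proper nonempty subsets, and $\Sigma_{SP}$ with $\Delta(\mathcal I)$. Deleting the vertex $\{a_n\}$, whose link is the nonempty chain $[p,n]$ (this is exactly where $n\ge 3$ enters), and applying the comparable-maps lemma then finishes the proof.

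What your route buys: it avoids the Coxeter realization theorem, the convex-cone geometry, and the nerve theorem. Your Step 1 also proves that $\Sigma_{SP}$ is a subcomplex more cleanly than the paper's merging argument. And it exhibits $\Delta(P_{SP})$ directly as the barycentric subdivision of $\Delta(\mathcal I)$.

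What the paper's route buys: it locates $|\Sigma_{SP}|$ concretely inside the sphere as a union of spherically convex peaked cones. Its nerve computation also explains the global shape of the complex.

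Two small remarks on Step 4:
\begin{enumerate}
\item You do not need the map $f$. The interval $[1,n-1]$ is the maximum of $\mathcal J$, so $\Delta(\mathcal J)$ is already a cone with that apex.
\item For the map $g$, what you actually use is $g\circ\iota=\mathrm{id}_{\mathcal J}$ together with $\iota\circ g\le\mathrm{id}_{\mathcal I'}$, where $\iota\colon\mathcal J\hookrightarrow\mathcal I'$ is the inclusion. This gives the homotopy equivalence $|\Delta(\mathcal J)|\simeq|\Delta(\mathcal I')|$. The phrase ``strong deformation retract up to homotopy'' is loose and could simply be replaced by that statement.
\end{enumerate}
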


The proof of this result is deferred to Appendix~\ref{app:sp-contractibility}.

We now construct a social welfare function on $P_{SP}$ that is unanimous, anonymous, and face-monotonic when $|\mathcal{A}| = 3$ and $|\mathcal{N}| \ge 2$. In this case, the nine single-peaked preferences are: 
\[
\begin{array}{lll}
p_1=a_1\succ(a_2\sim a_3),
&
p_2=a_1\succ a_2\succ a_3,
&
p_3=(a_1\sim a_2)\succ a_3,
\\[3pt]
p_4=a_2\succ a_1\succ a_3,
&
p_5=a_2\succ(a_1\sim a_3),
&
p_6=a_2\succ a_3\succ a_1,
\\[3pt]
p_7=(a_2\sim a_3)\succ a_1,
&
p_8=a_3\succ a_2\succ a_1,
&
p_9=a_3\succ(a_1\sim a_2).
\end{array}
\]
We equip $P_{SP}$ with the linear order 
\[ 
p_1\mathrel{\triangleleft}p_2\mathrel{\triangleleft}p_3 \mathrel{\triangleleft}p_4\mathrel{\triangleleft}p_5 \mathrel{\triangleleft}p_6\mathrel{\triangleleft}p_7 \mathrel{\triangleleft}p_8\mathrel{\triangleleft}p_9
\]
and define the coordinate map
\[ c\colon P_{SP}\longrightarrow\{1,\ldots,9\}, \qquad c(p_j)=j. \]
The face order on $P_{SP}$ is
\[ 
p_1<p_2>p_3<p_4>p_5<p_6>p_7<p_8>p_9. 
\]

\begin{prop}
Let $m = |\mathcal{N}|\geq 2$ and $|\mathcal{A}| = 3$. Define the social welfare function $F: (P_{SP})^m \to P_{SP}$
\[
F(A_1,\dots,A_m)
=
\begin{cases}
c^{-1}\!\left(
\operatorname{median}\bigl(c(A_1),\dots,c(A_m)\bigr)
\right),
& \text{if } m \text{ is odd},\\[6pt]
c^{-1}\!\left(
\operatorname{median}\bigl(c(A_1),\dots,c(A_m),5\bigr)
\right),
& \text{if } m \text{ is even}.
\end{cases}
\]
Then $F$ is unanimous, anonymous, and face-monotonic.
\end{prop}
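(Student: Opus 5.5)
The plan is to reduce all three properties to elementary facts about the median, after working out exactly what the face order on $P_{SP}$ allows. Since $m$ is odd in the first case and $m+1$ is odd in the second, the median is always taken over an odd number of integers. It is therefore the middle order statistic and lies in $\{1,\dots,9\}$, so $F$ is well defined.

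\emph{Unanimity and anonymity.} Anonymity is immediate, because the median of a multiset does not depend on the order of its entries, and the extra entry $5$ in the even case is the same for every profile. For unanimity, if every $A_i=p_j$ then the list is either $m$ copies of $j$, or $m$ copies of $j$ together with one $5$. Since $m\ge 2$, the $m$ copies of $j$ form a strict majority of the $m+1$ entries, so the median is $j$. In both cases $F(p_j,\dots,p_j)=p_j$.

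\emph{Reading off the face order.} From $p_1<p_2>p_3<\cdots>p_9$, the odd-indexed preferences are minimal: they are the vertices, with two indifference classes. The even-indexed ones are the strict preferences, and $p_{2k}$ has exactly the two faces $p_{2k\pm1}$. Hence $A'\le A$ holds exactly when either $A'=A$, or $c(A)$ is even and $c(A')=c(A)\pm1$.

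So if $\mathbf{A}'\le\mathbf{A}$, the coordinate vectors $x_i=c(A_i)$ and $x'_i=c(A'_i)$ satisfy the following:
\begin{itemize}
\item[(a)] $|x_i-x'_i|\le 1$ for every $i$;
\item[(b)] $x'_i=x_i$ whenever $x_i$ is odd.
\end{itemize}
We append the fixed entry $5$ to both lists in the even case. Let $k$ be the median of $x$ and $k'$ the median of $x'$. We must show $p_{k'}\le p_k$.

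\emph{Face-monotonicity, two cases.} First suppose $k$ is even. By (a), the vectors differ by at most $1$ in each coordinate, so each order statistic, in particular the median, moves by at most $1$. This is the standard $1$-Lipschitz property of order statistics in the sup norm: if $y\le x+1$ coordinatewise then every order statistic of $y$ is at most that of $x$ plus $1$, and symmetrically. So $k'\in\{k-1,k,k+1\}$, and each of $p_{k-1},p_k,p_{k+1}$ is $\le p_k$.

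Now suppose $k$ is odd. We claim $k'=k$. Entries equal to $k$ do not move, by (b). An entry with $x_i<k$ has $x_i\le k-1$, so $x'_i\le k$. An entry with $x_i>k$ has $x'_i\ge k$. Therefore the number of entries $\le k$ and the number of entries $\ge k$ do not decrease. Since $k$ was the middle order statistic, with at least $(N+1)/2$ entries on each side where $N$ is the (odd) length, the same holds for $x'$, which forces $k'=k$.

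The only step requiring care is this odd-median case. The point is that a vertex has no proper faces in $P_{SP}$, so an odd social outcome must be reproduced exactly. This is exactly where (b), and the fact that the padding value $5$ is odd and hence fixed, are used.
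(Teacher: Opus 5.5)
Your proof is correct, and it handles face-monotonicity with a different decomposition from the paper's.

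The paper first treats a change in a single coordinate, from a strict $p_j$ ($j$ even) to a face $p_{j\pm1}$. It observes that replacing one entry $j$ by $j\pm1$ either leaves the median unchanged or moves it from $j$ to $j\pm1$. It then passes from $\mathbf{A}$ to a general $\mathbf{B}\le\mathbf{A}$ through the intermediate profiles $\mathbf{A}^{(k)}$, changing one agent at a time, and uses transitivity of the face order.

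You instead compare the two full profiles at once and split on the parity of the old median. For an even median, the sup-norm $1$-Lipschitz property of order statistics does the work. For an odd median, your counting argument shows it is reproduced exactly.

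The trade-off is this. The paper's route needs only a one-entry perturbation fact, at the cost of a telescoping argument. Yours avoids the chain and gives a sharper global statement: a vertex outcome is invariant under any face-weakening of the whole profile, and a strict outcome moves by at most one step.

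One small correction to your closing remark: the parity of the padding value $5$ is not what you use. The padding entry is the same constant in both lists, so it satisfies (a) and (b) trivially whatever its value.
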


\begin{proof}
First, the claim that $F$ is unanimous and anonymous follows straightforwardly from the construction. 

Suppose that a single agent $i$ replaces $A_i$ with a proper face
$A_i'<A_i$. Since total indifference is not in $P_{SP}$, $A_i$ must
be a strict preference. Therefore,
\[
c(A_i)=j\in\{2,4,6,8\},
\qquad
c(A_i')=j\pm1.
\]
Consider the odd-cardinality multiset whose median defines $F$. Replacing one
entry $j$ by the adjacent entry $j\pm1$ either leaves the median unchanged or
changes it from $j$ to $j\pm1$. Thus, the social preference either
remains unchanged or changes from $p_j$ to $p_{j\pm1}$. Since
\[
p_{j\pm1}<p_j
\]
in the face order, we have
\[
F(A_1,\ldots,A_i',\ldots,A_m)
\leq
F(A_1,\ldots,A_i,\ldots,A_m).
\]
For the general case, assume that $(B_1, \dots, B_m) \leq (A_1, \dots, A_m)$. We introduce the intermediate profiles:
\[
\begin{aligned}
\mathbf{A}^{(0)}
&= (A_1,A_2,\ldots,A_m),\\
\mathbf{A}^{(1)}
&= (B_1,A_2,\ldots,A_m),\\
\mathbf{A}^{(2)}
&= (B_1,B_2,A_3,\ldots,A_m),\\
&\ \vdots\\
\mathbf{A}^{(m)}
&= (B_1,\ldots,B_m).
\end{aligned}
\]
Then by the previous argument, for all $1 \leq k \leq m$
\[
F(\mathbf{A}^{(k)}) \leq F(\mathbf{A}^{(k-1)}). 
\]
Thus, by transitivity of the order, $F(B_1, \dots, B_m) \leq F(A_1, \dots, A_m)$. $F$ is face-monotonic.
\end{proof}

Constructing a unanimous, anonymous, and face-monotonic social welfare
function becomes more complicated when $|\mathcal{A}|>3$, and it remains
unclear whether such a function always exists.

\section*{Acknowledgement}
The author thanks Isabelle Steinmann for her guidance through the process. The author thanks Joseph Root and Shmuel Weinberger for helpful discussions and valuable suggestions for this project. The author thanks Peter May and the University of Chicago Mathematics Research Experience for Undergraduates Program for supporting this project. The author thanks Xia Wu, Wei Fang, and Siwen Fang for their constant support. 

\section*{Statements and Declarations}

\noindent\textbf{Funding}\quad
This work was supported by a stipend from the University of Chicago Mathematics 2026 Research Experience for Undergraduates Program

\noindent\textbf{Competing interests}\quad
The author has no relevant financial or non-financial interests to disclose.

\noindent\textbf{Data availability}\quad
No datasets were generated or analyzed during the current study.

\begin{appendices}
\section{Proof of Theorem \ref{single-peakedness is contractible}} \label{app:sp-contractibility}
We provide the proof of Theorem \ref{single-peakedness is contractible} here. First, we show $\Sigma_{SP}$ is a subcomplex of $\Sigma$.

First, we define an equivalent notion of single-peakedness in terms of vectors in
$\mathbb R^n$. Fix the standard linear ordering of the coordinate indices $1<2<\cdots<n$.
Given $x=(x_1,\ldots,x_n)\in \mathbb R^n$, for each index $m\in\{1,\ldots,n\}$,
define the strict upper level set of $x_m$ by
\[
U_x(m):=\{i\in\{1,\ldots,n\}\mid x_i>x_m\}.
\]
We say that $x$ is \emph{unimodal} with respect to this ordering if
$U_x(m)$ is an interval for every $m$. We will also define the empty set as an
interval in this case.

\begin{prop} \label{single-peaked is a subcomplex}
Let $\mathcal{A}=\{a_1,\ldots,a_n\}$ be a finite set of alternatives with the fixed linear ordering $a_1<a_2<\cdots<a_n$.
Let $\pi=(B_1|\cdots|B_k)$ be the corresponding ordered partition of a weak preference, and let $C_{\pi}$ be the corresponding cell in the type $A_{n-1}$ braid arrangement. Then the following two statements are equivalent:
\begin{itemize}
    \item[(i)] $\pi$ is single-peaked.

    \item[(ii)] every $x \in C_{\pi}$ is unimodal.
\end{itemize}
We define $\Sigma_{SP}$ to be the set containing all cells $C_{\pi}$ that correspond to single-peaked preferences $\pi \in P_{SP}$. $\Sigma_{SP}$ is a well-defined subcomplex of $\Sigma$. 
\end{prop}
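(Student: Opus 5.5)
The plan has two parts: first the equivalence (i)$\Leftrightarrow$(ii), then closure of $\Sigma_{SP}$ under taking faces.

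\textbf{The equivalence.} By Theorem \ref{cells and preference}, for any $x\in C_\pi$ we have $x_i>x_m$ if and only if $a_i\succ a_m$ under $\pi$. Identifying indices with alternatives, this gives $U_x(m)=U(a_m)$ for every $m$ and every $x\in C_\pi$. Hence if $\pi$ is single-peaked, every $U_x(m)$ is an interval, so every $x\in C_\pi$ is unimodal. Conversely, $C_\pi$ is nonempty by Theorem \ref{cells and preference}, so we may pick a point $x\in C_\pi$. If that $x$ is unimodal, then each $U(a_m)=U_x(m)$ is an interval, which gives (i). In fact the sign pattern is constant on the cell, so unimodality of one point is equivalent to unimodality of all of them.

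\textbf{Subcomplex property.} By the description of faces, a face of $C_\pi$ corresponds to an ordered partition obtained from $\pi=(B_1|\cdots|B_k)$ by merging consecutive runs of blocks. A single merge of adjacent blocks $B_r,B_{r+1}$ suffices, by induction, since any face arises from such merges repeated. I also need that the totally indifferent preference causes no issue: it has all upper sets empty, so it is trivially single-peaked, and it sits as the empty simplex anyway.

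So let $\pi'$ be the preference obtained from $\pi$ by merging $B_r$ and $B_{r+1}$. I compute its strict upper sets $U'(a_m)$.
\begin{itemize}
\item If $a_m\in B_s$ with $s\le r$, then the blocks above $a_m$ are unchanged, so $U'(a_m)=U(a_m)$.
\item If $a_m\in B_s$ with $s\ge r+2$, then $U'(a_m)=B_1\cup\cdots\cup B_{s-1}=U(a_m)$, again unchanged.
\item If $a_m\in B_r\cup B_{r+1}$, then $U'(a_m)=B_1\cup\cdots\cup B_{r-1}$. This is $U(a)$ for any $a\in B_r$, an interval by hypothesis.
\end{itemize}
Thus every upper set of $\pi'$ is an upper set of $\pi$, hence an interval, and $\pi'$ is single-peaked. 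Therefore every face of a cell in $\Sigma_{SP}$ lies in $\Sigma_{SP}$.

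Via the isomorphism between $\Sigma$ and the Coxeter complex face poset, this says $\Sigma_{SP}$ is a downward-closed set of simplices, i.e.\ a subcomplex.

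\textbf{Main obstacle.} The only delicate point is justifying that every face is reached by successive adjacent-block merges. That follows from the face relation: $\sigma_{ij}(B)\in\{0,\sigma_{ij}(A)\}$ forces the blocks of $B$ to be unions of consecutive blocks of $A$, and I will check this directly from the sign-sequence definition.
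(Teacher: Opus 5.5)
Your proof is correct and follows essentially the same route as the paper. You identify $U_x(m)$ with $U(a_m)$ through the defining inequalities of $C_\pi$, then reduce face-closure to a single adjacent-block merge and observe that every strict upper set of the merged partition is already an upper set of $\pi$. Your explicit three-case computation and your plan to derive ``faces arise from adjacent merges'' from the sign-sequence definition fill in details that the paper only asserts.
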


\begin{proof}
By definition,
\[
C_\pi=\{x\in V \mid x_i=x_j \text{ if } a_i,a_j\in B_r;
\ x_i>x_j \text{ if } a_i\in B_r,\ a_j\in B_s,\ r<s\},
\]
From the construction of $C_\pi$, the coordinate inequalities in $C_\pi$ record exactly the strict preference relations in $\pi$. Therefore, $\pi$ is single-peaked if and only if every $x\in C_\pi$ is unimodal.

To show that $\Sigma_{SP}$ is a subcomplex of $\Sigma$, it is enough to show
that every face of a cell in $\Sigma_{SP}$ is again in $\Sigma_{SP}$.

Let $A\in \Sigma_{SP}$ and let $F\leq A$ be a face of the cell $A$. Since $F$ is obtained from the ordered partition of $A$ by repeatedly merging adjacent blocks, it is therefore sufficient to consider one such merger. Suppose that
\[
\pi_A = (B_1| \cdots |B_r|B_{r+1}| \dots |B_k)
\qquad
B_1 > \cdots > B_k,
\]
and we merge $B_r$ and $B_{r+1}$. 
\[
(B_1| \cdots |B_r \cup B_{r+1}| \dots |B_k)
\]
Then the strict upper level set $U_x(i)$ of every $x_i \in B_{r+1}$ becomes the strict upper level set $U_x(j)$ of every $x_j \in B_{r}$. Since $A \in \Sigma_{SP}$, the strict upper level set of every coordinate value is an interval. Thus, for $F$, $U_x(i) = U_x(j)$ is an interval. Because merging $B_r$ and $B_{r+1}$ does not affect the strict upper level sets of coordinate values that are not in $B_r$ and $B_{r+1}$, the resulting preference is single-peaked. Repeating this argument shows that every face $F\leq A$ lies in  $\Sigma_{\mathrm{SP}}$. Thus, $\Sigma_{SP}$ is a subcomplex of $\Sigma$.
\end{proof}

Fix an arbitrary Euclidean space $\mathbb{R}^n$, where $n \ge 3$, with the standard linear ordering of the coordinate indices $1 < 2 < \cdots < n$. For each index $p \in \{1,\dots, n\}$, we define $U_p$ as follows:
\[
U_p := \{x = (x_1, \dots, x_n) \in V \mid x_1 \leq \cdots \leq x_p \ge \cdots \ge x_n\},
\]
$U_p$ is a cone (i.e., closed under positive scalar multiplication). We call $U_p$ the \emph{peaked cone} at $p$.

\begin{prop} \label{peaked cone}
$K_p := \{A \in \Sigma \mid A \subset U_p\}$ is a subcomplex of $\Sigma_{SP}$. $U_p$ is a pointed, closed, and convex cone. 
\end{prop}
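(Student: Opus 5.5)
The statement has two parts: the cone facts about $U_p$, and the fact that $K_p$ is a subcomplex of $\Sigma_{SP}$. I will handle the cone facts first, since they are direct.

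\emph{Cone properties of $U_p$.} By definition
\[
U_p=\{x\in V \mid x_1\le x_2,\ \ldots,\ x_{p-1}\le x_p,\ x_p\ge x_{p+1},\ \ldots,\ x_{n-1}\ge x_n\}.
\]
This is a finite intersection of closed half-spaces of $V$, each of the form $\{x \mid x_i-x_{i+1}\le 0\}$ or $\{x\mid x_i-x_{i+1}\ge 0\}$. A finite intersection of closed half-spaces is closed and convex. Each defining inequality is homogeneous, so $U_p$ is also closed under nonnegative scalars, hence a closed convex cone.

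For pointedness I must show $U_p\cap(-U_p)=\{0\}$. If $x$ and $-x$ both lie in $U_p$, then every defining inequality holds with equality, so $x_1=\cdots=x_n$. Since $\sum x_i=0$, this forces $x=0$.

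\emph{$K_p$ lies in $\Sigma_{SP}$.} Let $A\in K_p$ and pick $x\in A\subset U_p$. I claim $x$ is unimodal: for each $m$, the set $U_x(m)=\{i\mid x_i>x_m\}$ is an interval. The sequence $x_1,\ldots,x_n$ is weakly increasing up to $p$ and weakly decreasing after $p$. The indices $i\le p$ with $x_i>x_m$ therefore form a terminal segment $[s,p]$ of $\{1,\ldots,p\}$. Likewise, the indices $i\ge p$ with $x_i>x_m$ form an initial segment $[p,t]$ of $\{p,\ldots,n\}$. Their union is either empty (when $x_p\le x_m$) or the interval $[s,t]$. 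So every point of $A$ is unimodal, and Proposition~\ref{single-peaked is a subcomplex} gives $A\in\Sigma_{SP}$.

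\emph{$K_p$ is closed under faces.} Let $A\in K_p$ and $B\le A$. The face relation means each $\sigma_{ij}(B)$ is either $0$ or equal to $\sigma_{ij}(A)$. A standard fact for hyperplane arrangements is that a face lies in the closure of the cell: $B\subseteq\overline{A}$. I can see this directly from the ordered partitions. $B$ is obtained by merging adjacent blocks of $\pi_A$, and I can approach a point of $B$ by points of $A$ by perturbing the merged coordinates apart in the order prescribed by $\pi_A$ (the block values $r_1>\cdots>r_k$ from the proof of Theorem~\ref{cells and preference} make this explicit). Since $U_p$ is closed, $B\subseteq\overline{A}\subseteq U_p$, so $B\in K_p$.

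\emph{Main obstacle.} The only step needing care is the closure inclusion $B\subseteq\overline{A}$. I would justify it concretely. Take $y\in B$ and $x\in A$, and set $z_t=y+t(x-y)$. For small $t>0$, the point $z_t$ satisfies every strict inequality of $A$. On pairs where $\sigma_{ij}(B)\neq0$, the strict inequality holds at $y$ and survives a small perturbation. On pairs where $\sigma_{ij}(B)=0$, we have $y_i=y_j$, so the sign of $(z_t)_i-(z_t)_j$ is the sign of $t(x_i-x_j)$, which matches $A$. Hence $z_t\in A$ for small $t>0$, and $z_t\to y$ as $t\to 0$, which gives $y\in\overline{A}$.
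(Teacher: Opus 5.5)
Your proof is correct. Three of its parts match the paper's proof: closed and convex via finitely many closed half-spaces, pointedness via $U_p\cap(-U_p)=\{0\}$, and $K_p\subseteq\Sigma_{SP}$ via unimodality and Proposition~\ref{single-peaked is a subcomplex}. Your terminal-segment/initial-segment argument supplies the detail the paper only asserts. The genuine difference is the closure-under-faces step.

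The paper argues combinatorially. A face $F\le A$ arises by turning some strict inequalities into equalities, and this cannot break the weak inequalities $x_i\le x_{i+1}$ ($i<p$) and $x_i\ge x_{i+1}$ ($i\ge p$) that cut out $U_p$. In sign-sequence terms, $A\subseteq U_p$ says $\sigma_{i,i+1}(A)\in\{0,-\}$ for $i<p$ and $\sigma_{i,i+1}(A)\in\{0,+\}$ for $i\ge p$. Passing to a face only replaces signs by $0$, so these conditions persist.

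You instead prove the lemma $B\subseteq\overline{A}$ using the segment $z_t=y+t(x-y)$, and then use that $U_p$ is closed. The paper's route is shorter and exploits the specific form of $U_p$ as a set defined by weak sign conditions on coordinate differences. Your route costs an extra lemma, but it shows more generally that $\{A\in\Sigma\mid A\subseteq U\}$ is a subcomplex for \emph{any} closed $U\subseteq V$. It also puts the closedness of $U_p$, which the statement asserts, to actual use.
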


\begin{proof}
First, we will show $K_p \subset \Sigma_{SP}$. Take an arbitrary cell $A \in K_p$. We know that $A \subset U_p$, so every $x \in A$ satisfies the relation $x_1 \leq \cdots \leq x_p \ge \cdots \ge x_n$. Such an $x$ is unimodal because every strict upper-level set $U_x(m) := \{i: x_i > x_m\}$ is an interval. By Proposition \ref{single-peaked is a subcomplex}, $A \in \Sigma_{SP}$. Since we've shown that $K_p \subset \Sigma_{SP}$, we will write $K_p := \{A \in \Sigma_{SP} \mid A \subset U_p\}$ from now on. Now we show $K_p$ is a well-defined subcomplex. Let $A$ be an arbitrary cell in $K_p$. Then we know the ordered partition $\pi_{A}$ of $\{x_1 ,\dots, x_n\}$ satisfies $x_1 \leq \cdots \leq x_p \ge \cdots \ge x_n$. By the definition of the face relation, $F$ is obtained from $A$ by changing some strict inequalities into equalities. Thus, the ordered partition $\pi_{F}$ also satisfies $x_1 \leq \cdots \leq x_p \ge \cdots \ge x_n$. 

By definition, $U_p$ is the intersection of finitely many closed half-spaces $\{x_i \leq x_{i+1}\} (i < p)$ and $\{x_i \ge x_{i+1}\} (i \ge p)$, so $U_p$ is a closed and convex polyhedral cone. 
It remains to show that $U_p$ is pointed. Suppose that $x\in U_p\cap(-U_p)$. Since both $x$ and $-x$ belong to $U_p$, each of the inequalities defining $U_p$ must hold in both directions. Hence, $x_i=x_{i+1}$ for every $1\leq i<n$, so $x_1=\cdots=x_n$.
Because $x\in V$, we also have $\sum_{i=1}^n x_i=0$, which forces $x=0$. Therefore, $U_p\cap(-U_p)=\{0\}$, proving that $U_p$ is pointed.
\end{proof}

We now show that the union of all subcomplexes $K_p$ is equal to the single-peaked subcomplex.

\begin{prop} \label{SP = union Kp}
Let $K_p=\{A\in \Sigma_{SP} : A\subseteq U_p\}$. Then
\[
\Sigma_{SP}=\bigcup_{p=1}^n K_p.
\]
Equivalently,
\[
\bigcup_{A\in \Sigma_{SP}} A=\bigcup_{p=1}^n U_p.
\]
\end{prop}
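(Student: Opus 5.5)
The plan is to prove the two inclusions $K_p\subseteq\Sigma_{SP}$ and $\Sigma_{SP}\subseteq\bigcup_p K_p$. The equivalent formulation then follows because cells partition $V\setminus\{0\}$ and each $U_p$ is a union of cells.

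The inclusion $\bigcup_p K_p\subseteq\Sigma_{SP}$ is already contained in Proposition~\ref{peaked cone}, which shows $K_p\subseteq\Sigma_{SP}$ for every $p$. So the substance is the reverse inclusion. Take a cell $A=C_\pi\in\Sigma_{SP}$ with $\pi=(B_1|\cdots|B_k)$. By Proposition~\ref{single-peaked is a subcomplex}, every $x\in A$ is unimodal, and the sign pattern of $x_i-x_j$ is constant on $A$. It therefore suffices to find a single index $p$, depending only on $\pi$, such that the chain $x_1\le\cdots\le x_p\ge\cdots\ge x_n$ holds for one (hence every) $x\in A$. Then $A\subseteq U_p$, so $A\in K_p$.

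To find $p$, I would take any index $p$ with $a_p\in B_1$, for instance the smallest such index. I then need to show $x_i\le x_{i+1}$ for $i<p$ and $x_i\ge x_{i+1}$ for $i\ge p$. Suppose instead that $x_i>x_{i+1}$ for some $i<p$. Then $i\in U_x(i+1)$, and $p\in U_x(i+1)$ as well: $x_p$ is a maximum value, and $x_p\ge x_i>x_{i+1}$. Since $i<i+1<p$, the set $U_x(i+1)$ contains $i$ and $p$ but not $i+1$, so it is not an interval. This contradicts unimodality. The case $x_i<x_{i+1}$ with $i\ge p$ is symmetric: here $i+1$ and $p$ lie in $U_x(i)$ while $i$ does not. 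If $i=p$ the inequality $x_p\ge x_{p+1}$ is automatic because $x_p$ is maximal. So in the case $i=p$ one just notes maximality, and in the remaining cases the interval argument applies with $p<i<i+1$.

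For the equivalent statement, I would argue as follows. Each $U_p\setminus\{0\}$ is a union of cells, because its defining inequalities are determined by sign sequences. Also, every nonzero $x\in V$ lies in exactly one cell. Hence
\[
\bigcup_{A\in\Sigma_{SP}}A=\bigcup_p\bigcup_{A\in K_p}A=\bigcup_p\bigl(U_p\setminus\{0\}\bigr).
\]
This matches $\bigcup_p U_p$ up to the origin, which corresponds to the excluded totally indifferent cell; I would remark on that convention.

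The main obstacle is mild bookkeeping rather than anything deep. One point is making sure the choice of $p$ and the unimodality contradiction cover the boundary indices correctly. The other is handling the origin and the totally indifferent cell consistently in the set-level reformulation.
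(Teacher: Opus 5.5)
Your proposal is correct and follows essentially the same route as the paper. You choose $p$ with $a_p\in B_1$, use unimodality from Proposition~\ref{single-peaked is a subcomplex} to force $x_1\le\cdots\le x_p\ge\cdots\ge x_n$, extend this to all of $A$ because sign patterns are constant on a cell, and use Proposition~\ref{peaked cone} for the reverse inclusion and the cell-union reformulation. Your explicit interval-violation argument simply fills in the step the paper asserts in one line; the boundary case $i+1=p$ in the first half is, like $i=p$, immediate from the maximality of $x_p$.
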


\begin{proof}
Let $A$ be an arbitrary cell in $\Sigma_{SP}$. By Proposition \ref{single-peaked is a subcomplex}, every point of $A$ is unimodal. Let $\pi_A = (B_1| \cdots |B_k)$ denote the ordered partition of $\{x_1, \dots, x_n\}$ corresponding to $A$. Let $x$ be an arbitrary point in $A$ and $p$ be an index such that $x_p \in B_1$. Since
$x$ is unimodal, we must have
\[
x_1\leq \cdots \leq x_p \geq \cdots \geq x_n.
\]
 Otherwise, some strict upper level set would contain two indices but miss an index between them, contradicting unimodality. Hence, $x\in U_p$. Since $A$ is a
cell, the weak inequalities satisfied by one point of $A$ are satisfied by every
point of $A$. Thus $A\subseteq U_p$, so $A\in K_p$. Therefore $\Sigma_{SP}\subseteq \bigcup_{p=1}^n K_p$.

Conversely, since $K_p \subset \Sigma_{SP}$ for all $p \in \{1, \dots, n\}$, it follows that $\bigcup_{p=1}^n K_p \subseteq \Sigma_{SP}$.
Thus,
\[
\Sigma_{SP}=\bigcup_{p=1}^n K_p.
\]

Equivalently, since $U_p$ is the union of the cells in $K_p$ for
each $p$, taking the union of the corresponding cells gives
\[
\bigcup_{A\in\Sigma_{\mathrm{SP}}} A
=
\bigcup_{p=1}^{n} U_p.
\]
\end{proof}

From Proposition \ref{prop:preference-complex-realization}, we know there is a homeomorphism $\phi: |\Sigma| \to S(V)$. Now we will study $\phi$ when restricted to $|\Sigma_{SP}|$ and each $|K_p|$.

\begin{prop} \label{|Sigma_{SP}| Homeo}
Given a finite set of alternatives $\mathcal{A} = \{a_1, \dots, a_n\}$, let $\Sigma_{SP}$ be the single-peaked subcomplex of the preference complex $\Sigma$. From Proposition \ref{prop:preference-complex-realization}, there exists a homeomorphism $\phi: |\Sigma| \to S(V)$. The following restrictions of $\phi$ give
homeomorphisms
\[
\phi|_{|\Sigma_{SP}|}:
|\Sigma_{SP}|
\longrightarrow
\bigcup_{p=1}^{n}\bigl(U_p\cap S(V)\bigr)
\]
and, for every $p\in\{1,\ldots,n\}$,
\[
\phi|_{|K_p|}:
|K_p|
\longrightarrow
U_p\cap S(V).
\]
\end{prop}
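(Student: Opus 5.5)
We argue directly from the cell-wise description of $\phi$ in Proposition \ref{prop:preference-complex-realization}. Since $\phi:|\Sigma|\to S(V)$ is a homeomorphism, the restriction of $\phi$ to any subspace is a homeomorphism onto its image. The restriction to a closed subspace is moreover a homeomorphism onto a closed subset. So the whole content of the statement is to identify the images
\[
\phi(|K_p|)=U_p\cap S(V),\qquad \phi(|\Sigma_{SP}|)=\bigcup_{p=1}^n\bigl(U_p\cap S(V)\bigr).
\]

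\textbf{Step 1: the image of $|K_p|$.} For a subcomplex $\Gamma\subseteq\Sigma$, the realization $|\Gamma|$ is the disjoint union of the open simplices $|A|$ with $A\in\Gamma$ nonempty, by Proposition \ref{preference poset as preference complex}. Proposition \ref{prop:preference-complex-realization} gives $\phi(|A|)=A\cap S(V)$ for each such $A$. Hence
\[
\phi(|\Gamma|)=\bigcup_{A\in\Gamma,\,A\neq\{0\}}\bigl(A\cap S(V)\bigr)=\Bigl(\bigcup_{A\in\Gamma}A\Bigr)\cap S(V),
\]
where the zero cell contributes nothing because $0\notin S(V)$. Apply this to $\Gamma=K_p$, which is a subcomplex by Proposition \ref{peaked cone}. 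It remains to show that $U_p$ is exactly the union of the cells of $K_p$.

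\begin{itemize}
\item The cells of $\Sigma$ partition $V$, since every $x\in V$ lies in the unique cell determined by the signs of $x_i-x_j$.
\item Membership in $U_p$ depends only on the weak comparisons among coordinates, which are constant on a cell.
\end{itemize}

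Consequently, any cell meeting $U_p$ is contained in $U_p$. This gives $U_p=\bigcup_{A\in K_p}A$, as already noted at the end of the proof of Proposition \ref{SP = union Kp}. Therefore $\phi(|K_p|)=U_p\cap S(V)$.

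\textbf{Step 2: the image of $|\Sigma_{SP}|$.} By Proposition \ref{SP = union Kp}, $\Sigma_{SP}=\bigcup_p K_p$, so $|\Sigma_{SP}|=\bigcup_p|K_p|$. Applying the bijection $\phi$ and Step 1 gives
\[
\phi(|\Sigma_{SP}|)=\bigcup_p\bigl(U_p\cap S(V)\bigr).
\]
Equivalently, one can use the second identity in Proposition \ref{SP = union Kp} together with the formula for $\phi(|\Gamma|)$.

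\textbf{Step 3: continuity of the inverse.} The restriction of a homeomorphism to a subspace is a homeomorphism onto its image in the subspace topology. So both restrictions are homeomorphisms once the images are identified.

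\textbf{Main obstacle.} The only delicate point is the claim that $U_p$ is exactly a union of cells, i.e.\ that the weak inequalities defining $U_p$ are constant on cells. This requires care because $U_p$ is defined with non-strict inequalities. However, each condition $x_i\le x_{i+1}$ is a union of the sign conditions $\sigma_{i,i+1}\in\{0,-\}$, so the claim follows from the sign-sequence description of cells.
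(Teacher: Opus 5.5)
Your proposal is correct and takes essentially the same route as the paper. Both arguments write the realization as a disjoint union of open simplices and use $\phi(|A|)=A\cap S(V)$ from Proposition~\ref{prop:preference-complex-realization}. Both then identify the resulting union of cells with $U_p$ (respectively with $\bigcup_p U_p$ via Proposition~\ref{SP = union Kp}) and conclude by restricting the homeomorphism $\phi$.

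Your sign-sequence check that $U_p$ is exactly a union of cells gives an explicit justification for a step the paper only asserts. Treating $K_p$ first and then taking the union is merely a reordering of the paper's argument.
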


\begin{proof}
By definition,
\[
|\Sigma_{SP}| = \bigcup_{\varnothing \neq A \in \Sigma_{SP}}|A|,
\]
where $A$ ranges over the nonempty simplices in $\Sigma_{SP}$, and $|A|$ denotes the corresponding open geometric simplices. By Proposition \ref{prop:preference-complex-realization},
\[
\phi|_{|A|}:|A|\longrightarrow A\cap S(V).
\]
Since the open simplices $|A|$ are pairwise disjoint, 
\[
\phi(|\Sigma_{SP}|) = \bigcup_{\varnothing \neq A \in \Sigma_{SP}}\phi(|A|) = \bigcup_{\varnothing \neq A \in \Sigma_{SP}} A \cap S(V).
\]
By Proposition \ref{SP = union Kp}, we know that 
\[
\bigcup_{A\in \Sigma_{SP}}A=\bigcup_{p=1}^n U_p.
\]
Thus, we have
\[
\phi(|\Sigma_{SP}|) = \bigcup_{p=1}^n U_p \cap S(V).
\]
Because $\phi: |\Sigma| \to S(V)$ is a homeomorphism and $|\Sigma_{SP}| \subset |\Sigma|$, we know that $\phi_{|\Sigma_{SP}|}$ is a homeomorphism onto its image. 

For each $p \in \{1, \dots, n\}$, applying the same argument to $K_p$, we obtain
\[
\phi(|K_p|)=U_p\cap S(V).
\]
Therefore, the restriction
\[
\phi|_{|K_p|}:|K_p|\longrightarrow U_p\cap S(V)
\]
is a homeomorphism.
\end{proof}

We now show a nice property of intersections of $U_p \cap S(V)$. This property will be crucial to the proof that $|\Sigma_{SP}|$ is contractible once we introduce the nerve complex.

\begin{prop} \label{peaked cones are contractible}
Let $I\subseteq \{1,\ldots,n\}$ be nonempty, and define
\[
U_I:=\bigcap_{p\in I}U_p
=\Bigl\{x\in V \;\Bigm|\;
x_1\le\cdots\le x_{\min I}=\cdots=x_{\max I}\ge\cdots\ge x_n\Bigr\}.
\]
Then $U_I$ is a pointed, closed, and convex cone. If $U_I\neq \{0\}$, then
\[
U_I\cap S(V)
\]
is contractible.
\end{prop}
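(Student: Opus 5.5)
First I will verify the displayed description of $U_I$. Write $q=\min I$ and $r=\max I$. A point $x$ lies in $U_q\cap U_r$ exactly when two conditions hold.
\begin{itemize}
\item The coordinates are weakly increasing up to index $r$, since $x\in U_r$.
\item The coordinates are weakly decreasing from index $q$ on, since $x\in U_q$.
\end{itemize}
On the block $q,\dots,r$ the coordinates are therefore both weakly increasing and weakly decreasing, so they are all equal. Conversely, any $x$ of the displayed form lies in $U_p$ for every $p\in I$, because $\min I\le p\le \max I$. Hence $U_I$ equals the displayed set.

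$U_I$ is closed, convex, and closed under positive scaling, because it is an intersection of the cones $U_p$, each of which is closed, convex and a cone by Proposition~\ref{peaked cone}. It is pointed because $U_I\subseteq U_p$ for any $p\in I$, and $U_p$ is pointed, so $U_I\cap(-U_I)\subseteq U_p\cap(-U_p)=\{0\}$.

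For contractibility, suppose $U_I\neq\{0\}$ and let $K=U_I\cap S(V)$. The plan is to find a vector $u$ with $\langle u,x\rangle>0$ for every nonzero $x\in U_I$, and then contract $K$ by radial projection of straight-line segments. I take $u$ to be the orthogonal projection onto $V$ of $e_q=e_{\min I}$, namely $u=e_q-\tfrac1n\mathbf{1}$. For $x\in V$ this gives $\langle u,x\rangle=x_q$. Every $x\in U_I$ has $x_q=\max_i x_i$, and since $\sum_i x_i=0$ the maximum coordinate is nonnegative; it equals $0$ only if all coordinates are $0$. So $\langle u,x\rangle=x_q>0$ for every nonzero $x\in U_I$.

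I fix a point $x_0\in K$, which exists because $U_I\ne\{0\}$. Define
\[
H(x,t)=\frac{(1-t)x+tx_0}{\lVert (1-t)x+tx_0\rVert},\qquad x\in K,\ t\in[0,1].
\]
\begin{itemize}
\item By convexity, the numerator $(1-t)x+tx_0$ lies in $U_I$.
\item The numerator is nonzero, since its pairing with $u$ is $(1-t)\langle u,x\rangle+t\langle u,x_0\rangle>0$. So $H$ is well defined and continuous.
\item Because $U_I$ is a cone, dividing by the norm keeps the point in $U_I$, and the result has norm $1$, so $H(x,t)\in K$.
\item $H(\cdot,0)=\mathrm{id}_K$, and $H(\cdot,1)$ is the constant map to $x_0$.
\end{itemize}
Hence $K$ is contractible.

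The only step that needs care is ruling out $0$ on the segment. This is handled by the positive functional $\langle u,\cdot\rangle$, which is where pointedness, in the concrete form $x_q=\max_i x_i\ge 0$, is used. Antipodal points would otherwise break the straight-line contraction, but they cannot both lie in $U_I$.
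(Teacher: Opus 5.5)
Your proof is correct and takes essentially the same route as the paper. Closedness, convexity and pointedness are inherited from the $U_p$, and contractibility comes from the same normalized straight-line homotopy $H(x,t)$ to a fixed base point of $U_I\cap S(V)$. The only differences are that you rule out a vanishing numerator with the strictly positive functional $x\mapsto x_{\min I}$, where the paper argues that $(1-t)x+tu_0=0$ would force $x\in U_I\cap(-U_I)=\{0\}$, and that you also verify the displayed description of $U_I$, which the paper takes for granted.
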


\begin{proof}
Take an arbitrary $I \subseteq \{1, \dots , n\}$. For every $p \in I$, by Proposition \ref{peaked cone}, $U_p$ is a pointed, closed, and convex cone. Since $U_I$ is a finite intersection of peaked cones $U_p$, $U_I$ is a closed and convex cone. Furthermore, for any index $p \in I$, 
\[
U_I \cap -U_I \subseteq U_p \cap -U_p = \{0\}.
\]
Thus, $U_I$ is also pointed.

Assume $U_I \neq \{0\}$. Since $U_I$ is a cone, we can rescale any nonzero element in it to a unit vector that is still in $U_I$. Fix such a unit vector $u_0 \in U_I \cap S(V)$. Define a function $H: U_I \cap S(V) \times [0,1] \to U_I \cap S(V)$ by
\[
H(x, t) = \frac{(1-t)x + tu_0}{\|(1-t)x + tu_0\|}.
\]

Notice that $(1-t)x + tu_0 \neq 0$. This is because, by convexity of $U_I$, $(1-t)x + tu_0 \in U_I$. If $(1-t)x + tu_0 = 0$, then $x = -\frac{t}{1-t} u_0 \in U_I \cap -U_I = \{0\}$, contradicting the fact that $\|x\| = 1$.

The map $H$ is well-defined because $U_I$ is a convex cone, so scaling $(1-t)x + tu_0 \in U_I$ down by its positive norm keeps the vector in $U_I$, making $H(x,t) \in U_I \cap S(V)$. By construction, the map $H$ is continuous. Since $H(x,0) = x$, $H(x,1) = u_0$, and $H(u_0 ,t) = u_0$ for all $t$, $H$ is a deformation retraction of $U_I \cap S(V)$ onto the point $u_0$. Thus, $U_I \cap S(V)$ is contractible. 
\end{proof}

In Proposition \ref{peaked cones are contractible}, we showed that if $U_I \neq \{0\}$, $U_I \cap S(V)$ is contractible. We will study the condition on the index set $I$ under which $U_I \neq \{0\}$.

\begin{prop} \label{the only empty is {1,n}}
Let $I \subseteq \{1, \dots, n\}$ be nonempty. Then $\{1, n\} \subseteq I$ if and only if $U_I \cap S(V) = \varnothing$.
\end{prop}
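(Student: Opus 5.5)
First observe that, because $U_I$ is a cone, $U_I\cap S(V)=\varnothing$ holds exactly when $U_I=\{0\}$. Any nonzero $x\in U_I$ can be rescaled to $x/\lVert x\rVert\in U_I\cap S(V)$, and conversely $S(V)$ does not contain $0$. It therefore suffices to prove that $U_I=\{0\}$ if and only if $\{1,n\}\subseteq I$. Throughout I would use the explicit description from Proposition~\ref{peaked cones are contractible}: writing $a=\min I$ and $b=\max I$,
\[
U_I=\{x\in V \mid x_1\le\cdots\le x_a=\cdots=x_b\ge\cdots\ge x_n\}.
\]
Verifying this description is a short check. For $p\in I$ we have $x\in U_p$, which forces $x_i\le x_{i+1}$ for $i<p$ and $x_i\ge x_{i+1}$ for $i\ge p$. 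Applying this with $p=a$ and $p=b$ gives both inequalities on the stretch $a\le i<b$, hence equality there. The description is also sufficient, since these conditions imply membership in each $U_p$ with $a\le p\le b$.

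For the direction ``$\{1,n\}\subseteq I\Rightarrow U_I=\{0\}$'': here $a=1$ and $b=n$, so every $x\in U_I$ satisfies $x_1=\cdots=x_n$. Combined with $\sum_i x_i=0$, this forces $x=0$, the same argument as in the pointedness part of Proposition~\ref{peaked cone}.

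For the converse, suppose $\{1,n\}\not\subseteq I$. Then either $a>1$ or $b<n$, and I will exhibit a nonzero vector in $U_I$. If $b<n$, define $y$ by $y_i=1$ for $i\le b$ and $y_i=0$ for $i>b$. If instead $a>1$, define $y_i=0$ for $i<a$ and $y_i=1$ for $i\ge a$. In each case $y$ is weakly increasing up to $a$, constant on $[a,b]$, and weakly decreasing afterwards. Since $y$ is nonconstant, its projection $x=y-\bar y\,(1,\dots,1)$ onto $V$ is nonzero. Subtracting a constant preserves all the defining inequalities, so $x\in U_I\setminus\{0\}$.

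The only step needing care is the interval description of $U_I$, specifically the claim that the conditions for $p$ strictly between $a$ and $b$ are implied by those for $a$ and $b$. Everything else is the same rescaling and mean-subtraction trick already used in Theorem~\ref{cells and preference}.
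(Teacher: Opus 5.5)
Your proof is correct and follows essentially the same route as the paper. The forward direction reduces to $x_1=\cdots=x_n$ in $V$, and in the converse your mean-centered step vectors are, after positive rescaling, exactly the vectors the paper writes down (for instance $1$ on $[1,b]$ and $-b/(n-b)$ beyond $b$). The only additions are that you explicitly verify the interval description of $U_I$ and the reduction $U_I\cap S(V)=\varnothing \iff U_I=\{0\}$; the paper uses both without proof.
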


\begin{proof}
Suppose that $\{1, n\} \subseteq I$. Then we know that min$I = 1$ and max$I = n$. Thus,
\[
U_I = \{x \in V \mid x_1 = \cdots = x_n\}.
\]
The only point in $V$ that satisfies $x_1 = \cdots = x_n$ is $0$. However, $0 \notin S(V)$. Thus, $U_I \cap S(V) = \varnothing$.

Conversely, if $\{1, n\} \not\subseteq I$, we can construct a point $y \in U_I$ such that $y \neq 0$. When $p = \operatorname{min}I > 1$, we define $y \in U_I \cap V$ as
\[
y_i = \begin{cases} 1, & i \ge p, \\ -\frac{n-p+1}{p-1}, & i < p. \end{cases}
\]
Similarly, when $\operatorname{min}I = 1$, we know $q = \operatorname{max} I < n$. We define $y \in U_I \cap V$ as
\[
y_i = \begin{cases} 1, & i \leq q, \\ -\frac{q}{n-q}, & i > q. \end{cases}
\]
Since $U_I$ is a cone, we can normalize $y$ to find an element in $U_I \cap S(V)$. Thus, when $\{1, n\} \not\subseteq I$, $U_I \cap S(V)$ is nonempty.
\end{proof}

Now we introduce a topological tool that we will use for our final proof that $|\Sigma_{SP}|$ is contractible.

\begin{defn}
Let $\mathcal{B} = \{B_1, \dots, B_n\}$ be a family of sets. The \emph{nerve complex} of $\mathcal{B}$, denoted by $N(\mathcal{B})$, is the abstract simplicial complex with the vertex set $\{1, \dots, n\}$ and with simplices given by
\[
N(\mathcal{B}) := \{\sigma \subseteq \{1, \dots, n\} \mid \bigcap_{i \in \sigma} B_i \neq \varnothing\}.
\]
\end{defn}

\begin{thm}  [Borsuk's Nerve Theorem \cite{matousek}] \label{nerve's theorem}
Let $K_1, \dots, K_n$ be subcomplexes of a finite simplicial complex $K$ that cover $K$, and let $B_i := |K_i|$. Suppose that the intersection $\bigcap_{i 
\in I} B_i$ is either empty or contractible for each nonempty $I \subseteq \{1, \dots, n\}$. Then the nerve complex $N(\{B_1, \dots, B_n\})$ is homotopy equivalent to $|K|$.
\end{thm}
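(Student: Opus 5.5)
The statement above is Borsuk's Nerve Theorem, which the paper quotes from \cite{matousek}. If I were to prove it rather than cite it, I would use the standard ``double mapping'' construction, which compares $|K|$ and $|N|$ through an intermediate space. Write $N:=N(\{B_1,\dots,B_n\})$ and, for a simplex $\sigma\in N$, set $B_\sigma:=\bigcap_{i\in\sigma}B_i$. Define
\[
Z:=\bigcup_{\varnothing\neq\sigma\in N} B_\sigma\times\overline{|\sigma|}\ \subseteq\ |K|\times|N|,
\]
where $\overline{|\sigma|}$ is the closed geometric simplex. There are two projections, $p_K\colon Z\to|K|$ and $p_N\colon Z\to|N|$. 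The plan is to show that both are homotopy equivalences; then $|K|\simeq Z\simeq|N|$.

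\textbf{Fibers of the two projections.}
\begin{itemize}
\item For $x\in|K|$, let $I(x)=\{i\mid x\in B_i\}$. This set is nonempty because the $K_i$ cover $K$. Every subset of $I(x)$ is a simplex of $N$, so the fiber $p_K^{-1}(x)$ is $\{x\}\times\overline{|I(x)|}$, which is a closed simplex and hence contractible.
\item For a point $y$ in the open simplex $|\sigma|$ of $N$, the fiber $p_N^{-1}(y)$ is $B_\sigma\times\{y\}$. This is nonempty by the definition of the nerve, and contractible by hypothesis.
\end{itemize}
Both maps therefore have contractible fibers. The main work is to upgrade ``contractible fibers'' to ``homotopy equivalence'', which is false for arbitrary maps. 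I would exploit the cell structure as follows.

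\textbf{The projection $p_N$.} I would argue by induction on the number of simplices of $N$. Order the simplices of $N$ by nondecreasing dimension. Adding a top simplex $\sigma$ to a subcomplex $N'$ changes $Z$ by a pushout: one glues $B_\sigma\times\overline{|\sigma|}$ along $B_\sigma\times\partial\overline{|\sigma|}$. On the base side, one glues $\overline{|\sigma|}$ along $\partial\overline{|\sigma|}$. Since $B_\sigma\to\ast$ is a homotopy equivalence, and all the inclusions involved are inclusions of subcomplexes and hence cofibrations, the gluing lemma for homotopy equivalences shows that $p_N$ remains a homotopy equivalence at every stage.

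\textbf{The projection $p_K$.} For this direction I would run the same induction over the simplices $\tau$ of $K$. The key point is that $I(x)$ is constant on each open simplex of $K$. Indeed, $x\in|\tau|^\circ$ lies in $B_i$ exactly when $\tau\in K_i$, because $K_i$ is a subcomplex. Hence over the closed simplex $\overline{|\tau|}$ the set $p_K^{-1}(\overline{|\tau|})$ contains $\overline{|\tau|}\times\overline{|I(\tau)|}$, and it is glued along the part lying over the boundary. The same gluing lemma then applies, with $\overline{|I(\tau)|}\to\ast$ as the fiberwise equivalence.

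\textbf{Expected obstacle.} I expect the main obstacle to be this $p_K$ step. Over a boundary face $\rho\subset\tau$ the index set only grows, $I(\rho)\supseteq I(\tau)$, so the pieces are not literally products over the whole closed simplex. One therefore has to check carefully that each attaching map is a cofibration and that the pushout squares commute. I would handle this by performing both inductions on the barycentric subdivision, which the paper already uses in Proposition~\ref{homeo of geom of order complex and face complex}. There, a chain $\tau_0<\cdots<\tau_k$ has $I(\tau_k)\subseteq I(\tau_0)$, and $Z$ becomes a genuine simplicial (or prism) complex. With that setup the gluing arguments are routine, and the two equivalences combine to give $N(\{B_1,\dots,B_n\})\simeq|K|$.
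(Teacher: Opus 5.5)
Your proposal is correct. It necessarily takes a different route from the paper, which does not prove this statement but quotes it from Matou\v{s}ek as a black box. You instead give the standard self-contained argument through the space $Z\subseteq|K|\times|N|$ (the same device as the realization of a complex of spaces in Hatcher's Section~4G). This produces an explicit zigzag $|K|\leftarrow Z\rightarrow|N|$ of homotopy equivalences. The citation keeps the paper short. Your route shows exactly where each hypothesis enters: $I(\tau)\neq\varnothing$ comes from the covering hypothesis, and contractibility of each nonempty $B_\sigma$ comes from the intersection hypothesis. It also works verbatim for finite CW complexes covered by subcomplexes.

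One correction to your own assessment: the obstacle you anticipate in the $p_K$ step does not arise, and you do not need the barycentric subdivision. Because $I(\rho)\supseteq I(\tau)$ for every face $\rho\le\tau$, you have
\[
Z=\bigcup_{\tau\in K}\overline{|\tau|}\times\overline{|I(\tau)|}.
\]
Now pass from a subcomplex $K'$ to $K'\cup\{\tau\}$ with $\tau$ maximal. Then $p_K^{-1}(|K'\cup\{\tau\}|)$ is obtained from $p_K^{-1}(|K'|)$ by attaching the product $\overline{|\tau|}\times\overline{|I(\tau)|}$ along exactly $\partial\overline{|\tau|}\times\overline{|I(\tau)|}$. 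Points of that attaching region already lie in $p_K^{-1}(|K'|)$ precisely because $I$ only grows on the boundary.

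So the $p_K$ step is literally dual to your $p_N$ step. There too the full preimage $p_N^{-1}(\overline{|\sigma|})$ is not a product, but you only ever glue the product piece $B_\sigma\times\overline{|\sigma|}$. The fact that $p_K^{-1}(\overline{|\tau|})$ is not a product is therefore irrelevant, and the gluing lemma applies with $\overline{|I(\tau)|}\to\ast$ exactly as you first wrote.

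The cofibration hypotheses are automatic. $Z$ is the subcomplex of the product cell structure on $|K|\times|N|$ consisting of the cells $|\tau|^\circ\times|\sigma|^\circ$ with $\sigma\subseteq I(\tau)$. This set of cells is closed under passing to faces, since $\sigma'\subseteq\sigma\subseteq I(\tau)\subseteq I(\rho)$ for $\rho\le\tau$.
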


Now we prove $|\Sigma_{SP}|$ is contractible using Theorem \ref{nerve's theorem}.

\begin{prop} \label{N(U) is homotopy equivalent to |Sigma_sp|}
Let $\mathcal{U} := \{\, U_p \cap S(V) \mid 1 \le p \le n \,\}$. Then $N(\mathcal{U})$ is homotopy equivalent to $|\Sigma_{SP}|$.
\end{prop}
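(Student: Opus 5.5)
We apply Theorem \ref{nerve's theorem} with $K=\Sigma_{SP}$ and the subcomplexes $K_1,\dots,K_n$. There are three hypotheses to check, and then we transport the conclusion along $\phi$.

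First, the $K_p$ are subcomplexes of $\Sigma_{SP}$ by Proposition \ref{peaked cone}. They cover $\Sigma_{SP}$ by Proposition \ref{SP = union Kp}. Since $\Sigma$ is finite, so is $\Sigma_{SP}$.

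Next, set $B_p:=|K_p|$. By Proposition \ref{|Sigma_{SP}| Homeo}, $\phi$ restricts to a homeomorphism $B_p\to U_p\cap S(V)$. Because $\phi$ is a single injective map on all of $|\Sigma|$, intersections are preserved: for every nonempty $I\subseteq\{1,\dots,n\}$,
\[
\phi\Bigl(\bigcap_{p\in I}B_p\Bigr)=\bigcap_{p\in I}\bigl(U_p\cap S(V)\bigr)=U_I\cap S(V).
\]
So $\phi$ restricts to a homeomorphism $\bigcap_{p\in I}B_p\cong U_I\cap S(V)$. This also shows that the nerve of $\{B_p\}$ and the nerve $N(\mathcal U)$ have the same simplices, hence are the same abstract complex on the vertex set $\{1,\dots,n\}$.

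Then we verify the empty-or-contractible hypothesis. By Proposition \ref{the only empty is {1,n}}, $U_I\cap S(V)$ is empty exactly when $\{1,n\}\subseteq I$. Otherwise it is nonempty, so $U_I\neq\{0\}$, and Proposition \ref{peaked cones are contractible} gives contractibility. Transferring through the homeomorphism of the previous step, each $\bigcap_{p\in I}B_p$ is empty or contractible. Theorem \ref{nerve's theorem} then yields $N(\{B_p\})\simeq|\Sigma_{SP}|$, and hence $N(\mathcal U)\simeq|\Sigma_{SP}|$.

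The main obstacle is the intersection step: we need $\bigcap_p |K_p| = |\bigcap_p K_p|$ and its image under $\phi$ to equal $U_I\cap S(V)$. Injectivity of $\phi$ gives the set equality directly. Alternatively, one can argue cellwise: a cell lies in every $U_p$, $p\in I$, exactly when it lies in $U_I$, and $U_I$ is a union of cells because it is cut out by weak coordinate inequalities. Combined with the fact from Proposition \ref{prop:preference-complex-realization} that $\phi$ carries each open simplex $|A|$ onto $A\cap S(V)$, this gives the identification.
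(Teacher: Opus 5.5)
Your proposal is correct and follows the paper's proof step for step. You restrict $\phi$ to each $|K_p|$ and use injectivity of $\phi$ to identify $\bigcap_{p\in I}|K_p|$ with $U_I\cap S(V)$. The two results on $U_I$ (contractible when nonempty, empty exactly when $\{1,n\}\subseteq I$) give the empty-or-contractible hypothesis, and you apply the nerve theorem to the cover $\Sigma_{SP}=\bigcup_p K_p$. Finally, the two nerves coincide because $\phi$ is a bijection, and your additional cellwise justification of the intersection identity is a harmless alternative that is not needed.
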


\begin{proof}
By Proposition \ref{|Sigma_{SP}| Homeo}, $\phi$ restricts to a homeomorphism
$|K_p| \to U_p \cap S(V)$ for each $p \in \{1, \dots, n\}$. Since $\phi$ is a homeomorphism from $|\Sigma|$ to $S(V)$ (Proposition \ref{prop:preference-complex-realization}), it is injective. Thus, $\phi$ commutes with
intersections: for every nonempty $I \subseteq \{1,\dots,n\}$,
\[
  \phi \; \!\Big(\bigcap_{p\in I}|K_p|\Big)
  = \bigcap_{p\in I} \phi(|K_p|)
  = \bigcap_{p\in I}\big(U_p \cap S(V)\big)
  = \Big(\bigcap_{p\in I} U_p\Big)\cap S(V)
  = U_I \cap S(V).
\]
By Propositions \ref{peaked cones are contractible} and \ref{the only empty is {1,n}}, $U_I \cap S(V)$ is contractible when $\{1,n\}\not\subseteq I$ and empty when $\{1,n\}\subseteq I$. Because $\phi$ is a homeomorphism, $\bigcap_{p\in I}|K_p|$ is likewise empty or contractible.
From Proposition \ref{SP = union Kp}, the $K_p$ are subcomplexes of the finite simplicial complex $\Sigma_{SP} = \bigcup_{p=1}^n K_p$, so Theorem \ref{nerve's theorem} applies and shows that $N(\{|K_p|\})$ is homotopy equivalent to $|\Sigma_{SP}|$. Finally, since $\phi$ is a bijection, $\bigcap_{p\in I}|K_p|\neq\varnothing$ if and only if $\bigcap_{p\in I}(U_p\cap S(V))\neq\varnothing$, so the two families have the same nerve complex, $N(\mathcal{U}) = N(\{|K_p|\})$. Hence, $N(\mathcal{U})$ is homotopy equivalent to $|\Sigma_{SP}|$.
\end{proof}

By Proposition \ref{N(U) is homotopy equivalent to |Sigma_sp|}, the contractibility of 
$ \lvert \Sigma_{\mathrm{SP}} \rvert $ reduces to that of $N(\mathcal{U})$.
We now show the structure of $N(\mathcal{U})$ explicitly. Before doing so,
we introduce a new notation. We write $\Delta(1,\ldots,n)$ for the abstract
simplicial complex consisting of the $(n-1)$-simplex with vertex set
$\{1,\ldots,n\}$, together with all its faces.

\begin{prop} \label{N(U) is two simplices}
The nerve complex $N(\mathcal{U})$ is given by
\[
N(\mathcal{U})
=
\Delta(1,\ldots,n-1)\cup\Delta(2,\ldots,n).
\]
The two simplices intersect in the common subcomplex
$\Delta(2,\ldots,n-1)$.
\end{prop}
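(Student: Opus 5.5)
By definition of the nerve, a nonempty $I\subseteq\{1,\ldots,n\}$ is a simplex of $N(\mathcal{U})$ exactly when $\bigcap_{p\in I}\bigl(U_p\cap S(V)\bigr)=U_I\cap S(V)\neq\varnothing$. So the plan is to read off the simplices directly from Proposition \ref{the only empty is {1,n}}. That result says $U_I\cap S(V)=\varnothing$ if and only if $\{1,n\}\subseteq I$. Hence the simplices of $N(\mathcal{U})$ are precisely the nonempty $I$ (together with $\varnothing$, by the paper's convention) that do not contain both $1$ and $n$.

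The next step is to match this family with $\Delta(1,\ldots,n-1)\cup\Delta(2,\ldots,n)$ by double inclusion. If $I$ does not contain both $1$ and $n$, then either $n\notin I$, so $I\subseteq\{1,\ldots,n-1\}$ and $I\in\Delta(1,\ldots,n-1)$, or $1\notin I$, so $I\in\Delta(2,\ldots,n)$. Conversely, every face of $\Delta(1,\ldots,n-1)$ omits $n$ and every face of $\Delta(2,\ldots,n)$ omits $1$. In both cases $\{1,n\}\not\subseteq I$, so $I\in N(\mathcal{U})$. This gives the claimed equality of abstract simplicial complexes.

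For the intersection statement, a subset lies in both simplices exactly when it is contained in $\{1,\ldots,n-1\}\cap\{2,\ldots,n\}=\{2,\ldots,n-1\}$. So the common subcomplex is $\Delta(2,\ldots,n-1)$. Since $n\ge 3$, this set is nonempty, so the intersection is a genuine (nonempty) simplex rather than only the empty face. This nonemptiness is what the subsequent contractibility argument will use: the union of two simplices glued along a nonempty simplex is contractible.

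I do not expect a real obstacle, since everything reduces to Proposition \ref{the only empty is {1,n}}. The only points needing care are bookkeeping. First, the empty simplex and the vertex set conventions should be handled consistently with the paper's definition of nerve. Second, each vertex $p$ must actually be a vertex of $N(\mathcal{U})$, i.e.\ $U_p\cap S(V)\neq\varnothing$. This holds because the singleton $\{p\}$ never contains both $1$ and $n$ when $n\ge 2$.
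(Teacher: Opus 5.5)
Your proposal is correct and follows essentially the same route as the paper. Both arguments read off the simplices of $N(\mathcal{U})$ from the characterization that $U_I\cap S(V)=\varnothing$ exactly when $\{1,n\}\subseteq I$, identify the resulting family with $\Delta(1,\ldots,n-1)\cup\Delta(2,\ldots,n)$ by double inclusion, and your use of $\{1,\ldots,n-1\}\cap\{2,\ldots,n\}=\{2,\ldots,n-1\}$ simply makes explicit the intersection claim the paper states as a remark.
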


\begin{proof}
Fix an arbitrary simplex $\sigma \in N(\mathcal{U})$. We know that $\sigma \subset \{1, \dots, n\}$. By definition of the nerve complex, $\sigma$ is a simplex in $N(\mathcal{U})$ if and only if
\[
U_{\sigma} \cap S(V) \neq \varnothing.
\]
From Proposition \ref{the only empty is {1,n}}, $\{1, n\} \not\subseteq \sigma$, meaning that $\sigma$ is in $\Delta(1, \dots, n-1) \cup \Delta(2, \dots, n)$. Because $\sigma$ is an arbitrary simplex in $N(\mathcal{U})$, $N(\mathcal{U}) \subseteq \Delta(1, \dots, n-1) \cup \Delta(2, \dots, n)$. Before showing the other direction, we will make a quick remark on $\sigma \in N(\mathcal{U})$. If neither $1$ nor $n$ is in $\sigma$, then $\sigma$ is in both $\Delta(1, \dots, n-1)$ and $\Delta(2, \dots, n)$. This is because these two $(n-2)$-simplices share a common $(n-3)$-simplex $\Delta(2, \dots, n-1)$. On the other hand, if $\sigma$ contains either $1$ or $n$, then $\sigma$ is in either $\Delta(1, \dots, n-1)$ or $\Delta(2, \dots, n)$.

Conversely, fix an arbitrary simplex $\delta \in \Delta(1, \dots, n-1) \cup \Delta(2, \dots, n)$. By our construction, $\{1, n\} \not\subseteq \delta$. Applying Proposition \ref{the only empty is {1,n}}, we get that
\[
U_{\delta} \cap S(V) \neq \varnothing.
\]
Thus, by definition of the nerve complex, $\delta \in N(\mathcal{U})$. Since $\delta$ is arbitrary, $\Delta(1, \dots, n-1) \cup \Delta(2, \dots, n) \subseteq N(\mathcal{U})$. We conclude that $N(\mathcal{U}) = \Delta(1, \dots, n-1) \cup \Delta(2, \dots, n)$.
\end{proof}

Now we show that $N(\mathcal U)$ is contractible. To prove this, we will show $N(\mathcal U)$ is a cone.

\begin{prop} \label{N(U) is contractible}
The nerve complex $N(\mathcal U)$ is a cone. Hence, $N(\mathcal U)$ is contractible.
\end{prop}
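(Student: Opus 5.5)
We will show that $N(\mathcal U)$ is a cone with apex the vertex $2$ (any vertex in $\{2,\ldots,n-1\}$ would work; this set is nonempty because $n\ge 3$). By Proposition \ref{N(U) is two simplices}, we may use the explicit description
\[
N(\mathcal U)=\Delta(1,\ldots,n-1)\cup\Delta(2,\ldots,n).
\]
Equivalently, by Proposition \ref{the only empty is {1,n}}, a subset $\sigma\subseteq\{1,\ldots,n\}$ is a simplex of $N(\mathcal U)$ if and only if $\{1,n\}\not\subseteq\sigma$. We use this characterization throughout.

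First, recall the definition from Section \ref{section on extreme preference}: $N(\mathcal U)$ is a cone with apex $v$ if $\sigma\cup\{v\}\in N(\mathcal U)$ for every simplex $\sigma\in N(\mathcal U)$. Take $v=2$ and let $\sigma$ be an arbitrary simplex. Since $\{1,n\}\not\subseteq\sigma$, and $2\notin\{1,n\}$ because $n\ge3$, we get $\{1,n\}\not\subseteq\sigma\cup\{2\}$. Hence $\sigma\cup\{2\}\in N(\mathcal U)$.

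We can also see this simplex by simplex. If $\sigma\in\Delta(1,\ldots,n-1)$, then $\sigma\cup\{2\}\subseteq\{1,\ldots,n-1\}$, because $2\le n-1$. If $\sigma\in\Delta(2,\ldots,n)$, then $\sigma\cup\{2\}\subseteq\{2,\ldots,n\}$. Either way, $\sigma\cup\{2\}$ is a face of one of the two top simplices. The same argument covers the empty simplex. Thus $N(\mathcal U)$ is a cone with apex $2$.

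Finally, the geometric realization of a cone is a topological cone, and a topological cone is contractible. This is the same fact used for $\Delta(P^{\min})$ in Section \ref{section on extreme preference}. An explicit straight-line homotopy
\[
H(x,t)=(1-t)x+te_2
\]
contracts $|N(\mathcal U)|$. It stays inside $|N(\mathcal U)|$ because each point $x$ lies in a closed simplex $|\sigma|$, and $\sigma\cup\{2\}$ is again a simplex. Hence $N(\mathcal U)$ is contractible.

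Combining this with Proposition \ref{N(U) is homotopy equivalent to |Sigma_sp|} then yields Theorem \ref{single-peakedness is contractible}. The only point needing care is the hypothesis $n\ge3$, which guarantees that an apex outside $\{1,n\}$ exists. When $n=2$, $N(\mathcal U)$ consists of two isolated points and is not a cone.
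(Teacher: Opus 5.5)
Your proof is correct and is essentially the paper's argument. Both use the characterization that $\sigma$ is a simplex iff $\{1,n\}\not\subseteq\sigma$, and conclude that any vertex outside $\{1,n\}$ is a cone apex (you fix $2$; the paper takes any vertex of $\Delta(2,\ldots,n-1)$). Your explicit straight-line contraction and your remark that $n\ge 3$ is needed add detail but do not change the approach.
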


\begin{proof}
From Proposition \ref{N(U) is two simplices}, $N(\mathcal{U}) = \Delta(1, \dots, n-1) \cup \Delta(2, \dots, n)$. Let $v$ be any vertex of $\Delta(2, \dots, n-1)$. Fix an arbitrary $\sigma \in N(\mathcal{U})$. We know that $\{1, n\} \not\subseteq \sigma$. Since $v$ is a vertex in $\Delta(2, \dots, n-1)$, $v \notin \{1, n\}$. Hence, $\{1, n\} \not\subseteq \sigma \cup \{v\}$, meaning that $\sigma \cup \{v\} \in N(\mathcal{U})$. Since $\sigma$ is arbitrary, this result holds for every simplex in $N(\mathcal{U})$, proving $N(\mathcal{U})$ is a cone. Since all cones are contractible, $N(\mathcal{U})$ is contractible. 
\end{proof}

Propositions \ref{N(U) is homotopy equivalent to |Sigma_sp|} and \ref{N(U) is contractible} together prove Theorem \ref{single-peakedness is contractible}.
\end{appendices}

\end{document}